\documentclass[final, journal]{IEEEtran}

\usepackage{amsmath, amssymb, amsfonts, amsxtra, amscd}
\usepackage{amsthm}
\usepackage{latexsym}
\usepackage{graphicx}
\usepackage{verbatim}
\usepackage{mathrsfs}
\usepackage{algorithmic}
\usepackage{float}
\usepackage{url}
\usepackage[lofdepth, lotdepth]{subfig}
\usepackage[top=2.5cm, bottom=3cm, left=2.5cm, right=2.5cm]{geometry}

% Generated with LaTeXDraw 2.0.8
% Fri Aug 03 10:36:01 SGT 2012
\usepackage[usenames,dvipsnames]{pstricks}
\usepackage{epsfig}
\usepackage{pst-grad} % For gradients

\theoremstyle{plain}
  \newtheorem{theorem}{Theorem}[section]
  \newtheorem{lemma}[theorem]{Lemma}
  
  \newtheorem{corollary}[theorem]{Corollary}
\theoremstyle{definition}
  \newtheorem{definition}[theorem]{Definition}
  
  \newtheorem{example}[theorem]{Example}
  \newtheorem{remark}[theorem]{Remark}

\newcommand{\A}{{\mathcal A}}
\newcommand{\B}{{\mathcal B}}
\newcommand{\C}{{\mathcal C}}

\newcommand{\D}{{\mathcal D}}

\newcommand{\G}{{\mathcal G}}

\newcommand{\X}{{\mathcal X}}

\newcommand{\lam}{\lambda}

\newcommand{\nin}{\noindent}

\newcommand{\et}{{\emph{et al.}}}

\title{Parity Declustering for Fault-Tolerant Storage Systems via $t$-designs
\thanks{%
S. H. Dau is with the SUTD-MIT International Design Centre, Singapore University of Technology and Design (e-mail: sonhoang\_dau@sutd.edu.sg).  
This work was done when he was with the Division of Mathematical Sciences, School of Physical and Mathematical Sciences,
Nanyang Technological University.
}
\thanks{% 
Y. Jia, C. Jin, W. Xi, and K. S. Chan are with the Data Storage
Institute (DSI), Agency For Science, Technology And Research (A*STAR), North Connexis Tower, Fusionopolis, Singpore 138632 (e-mails: \{jia\_yan, jin\_chao, xi\_weiya, chan\_kheong\_sann\}@dsi.a-star.edu.sg).    
} 
}
\author{Son Hoang Dau, Yan Jia, Chao Jin, Weiya Xi, Kheong Sann Chan}

\begin{document}
\maketitle

\begin{abstract}
Parity declustering allows faster reconstruction of a disk array when some disk fails. 
Moreover, it guarantees uniform reconstruction workload on all surviving disks. 
It has been shown that parity declustering for 
one-failure tolerant array codes can be obtained via Balanced Incomplete Block Designs.
We extend this technique for array codes that can tolerate an arbitrary number of 
disk failures via $t$-designs. 
\end{abstract}

\section{Introduction}

RAID (Redundant Array of Independent Disks) has been widely used as a large-scaled and 
reliable storage system since its introduction in 1988 \cite{PattersonGibsonKatz1988}. However, the key limitation of the 
first 6 levels of RAID (RAID-0 to RAID-5) is that system recovery can be possible with at most one disk failure. 
RAID-6 has been proposed as a new RAID standard, which requires that any one or two disk
failures can be fixed. Several types of codes that can correct two erasures have been proposed,  
such as Reed-Solomon (RS) code \cite{Plank1997}, EVEN-ODD code \cite{BlaumBradyBruckMenon1995}, B-code \cite{XuBohossianBruckWagner1998}, X-code \cite{XuBruck2008}, 
and RDP code \cite{Corbett2004}. Codes that allow the recovery from more than two failures have also been 
investigated \cite{FengDengBaoShen2005, FengDengBaoShen2005_2, HuangXu2008}. The main limitation of RS codes is the high encoding and decoding complexity, 
which involves computation over finite fields. The other types of codes, called array codes, are preferred by storage system designers due to the fact that their encoding and decoding requires
only XOR operations.

The majority of known array codes are MDS (Maximum Distance Separable) codes (see~\cite{MW_and_S}). MDS array codes
have optimal redundancy ($\delta$ redundant disks are used in a $\delta$-failure tolerant array code). The main
issue with them is that when $\delta$ disks fail, all data in every surviving disk has to be read for reconstruction.
This results in slow reconstruction time when disk capacities get larger and increases the possibility of 
another failure, which renders the reconstruction impossible. Moreover, as all disks must be fully accessed
for the recovery purpose, the system operates in its degraded mode: responses to user requests take 
longer time than usual. 

Parity declustering (or clustered RAID) was proposed by Muntz and Lui \cite{MuntzLui1990} as a data layout technique
that allows faster reconstruction and uniform reconstruction workloads 
on surviving devices during reconstruction of one disk failure.
Here, the reconstruction workload refers to the amount of data 
that needs to be accessed on the surviving disks in order to reconstruct the data on the failed disk.  
Faster reconstruction stems from the feature of the declustered-parity data layout
that requires only a \emph{partial} access instead of a full access to each surviving disk. 
In other words, the special layout allows 
reconstruction of data on a failed disk \emph{without} reading all data in every 
surviving disk. 
Muntz and Lui suggested that designing such 
a layout is a combinatorial block design problem, but gave no further details.
Holland and Gibson \cite{HollandGibson1992}, Ng and Mattson \cite{NgMattson1992} investigated 
the construction of parity-declustered data layouts from Balanced Incomplete Block Designs (BIBD).
The work of Reddy and Banerjee \cite{ReddyBannerjee1991} also followed the same approach, even
though they focused more on a special type of BIBDs. 

For codes that can tolerate $\delta \geq 2$ disk failures, it is also desirable to have a declustered-parity 
data layout. More specifically, we want to design a layout such that when at most $\delta$ disks fail, 
only \emph{a portion} of the disk content on each healthy disk needs to be accessed for the recovery process. 
Moreover, the reconstruction workload is distributed \emph{uniformly} to all surviving disks. 
There has been several work where parity declustering for $\delta$-failure tolerant codes ($\delta \geq 2$) are 
considered, such as \cite{AlvarezBurkhardCristian1997} and
\cite{Alvarez-et-1998}. However, none of them guarantee the uniform workloads during the reconstruction
of more than one disk. Corbett \cite{Corbett2008} proposed that two array codes of the
same size can be combined into a larger array that has \emph{almost} uniform reconstruction workloads when one or two disks fail. However, Corbett's method only achieves uniform workloads among the \emph{data} disks, 
not over all surviving disks (data disks and parity disks). Moreover, his construction produces an array 
code of a prohibitively large size, which is at least $\binom{n}{n/2} \times (n+4)$, where $n$ is the number of data disks in the final array code.   

We investigate the construction of declustered-parity layouts for codes that tolerate $t-1$ disk failures 
via $t$-designs ($t \geq 2$). In fact, BIBDs, which are used to decluster parities for one-failure tolerant codes, are
$2$-designs. The main idea is to start with an array code of $k$ columns that has uniform workloads for 
reconstruction of every $s \leq t -1$ columns. Then, the $k$ columns of this code are spread out over 
$n > k$ disks, using blocks of a $t$-$(n,k,\lam)$ design (see Section~\ref{sec:preliminaries} for all definitions). As a result, 
we obtain an array code with $n$ disks that possesses the following properties. Firstly, in order to recover
any $s \leq t - 1$ disks, only a portion of the disk
content, which is a designed parameter, must be read for disk recovery. Secondly,
the reconstruction workload is uniformly distributed to every surviving disk. And lastly, 
the parity units are distributed evenly over all disks, which eliminates hot spots during data update.
To the best of our knowledge, this is the first work that extends the well-known 
parity declustering technique (originally proposed for one-failure tolerant codes) for $\delta$-failure tolerant codes, 
for any $\delta \geq 1$. 

The paper is organized as follows. Necessary definitions and notations are provided in Section~\ref{sec:preliminaries}. 
In this section, we also review the parity declustering technique for one-failure tolerant codes based on BIBDs. 
We extend this technique for two-failure tolerant codes via $3$-designs in Section~\ref{sec:pd-2-failures}. 
In Section~\ref{sec:pd-(t-1)-failures}, we discuss the generalization of this idea for codes that can tolerate $\delta \geq 2$
disk failures. The paper is concluded in Section~\ref{sec:conclusion}.
  
\section{Preliminaries}
\label{sec:preliminaries}

Disk arrays spread data across several disks and access them in parallel to increase data transfer rates and I/O rates. 
Disk arrays are, however, highly vulnerable to disk failures. An array with $n$ disks is $n$ times more likely 
to fail than a single disk \cite{PattersonGibsonKatz1988}. Adding redundancy to a disk array is a natural solution
to this problem. \emph{Units} of data on $k$ disks are grouped together into \emph{parity groups} (or parity stripes).
Each parity group consists of $k-1$ data units and one parity unit. The parity unit is calculated by taking the XOR-sum
of the data units in the same group. The parity unit must be updated whenever a data unit in its group is modified. 
Therefore, the parity units should be distributed across the array rather than all being located on a small subset of disks. 
Otherwise we would have the situation where some disks are always busy updating the parity units while the others are 
totally idle. Ideally, we want to have the same number of parity units on every disk. This requirement guarantees that 
the parity update workload is uniformly distributed among all disks. Additionally, it is required that no two units from 
the same parity group are located on the same disk, so that the disk array can always be recovered from one disk failure. 

\begin{figure}[h]
\center
\scalebox{1} % Change this value to rescale the drawing.
{
\begin{pspicture}(0,-2.41)(5.148,2.41)
\definecolor{color1553b}{rgb}{0.8,0.8,0.8}
\psframe[linewidth=0.04,dimen=outer](5.148,1.59)(0.0,-2.41)
\psline[linewidth=0.04cm](0.04,0.76)(5.128,0.79)
\psline[linewidth=0.04cm](0.04,-0.04)(5.128,-0.05)
\psline[linewidth=0.04cm](0.04,-0.84)(5.128,-0.85)
\psline[linewidth=0.04cm](1.04,1.5702401)(1.068,-2.39)
\psline[linewidth=0.04cm](2.04,1.5702401)(2.048,-2.39)
\psline[linewidth=0.04cm](3.04,1.5702401)(3.048,-2.37)
\usefont{T1}{ptm}{m}{n}
\rput(0.5079688,1.16){$D_0$}
\usefont{T1}{ptm}{m}{n}
\rput(1.5279686,1.16){$D_0$}
\usefont{T1}{ptm}{m}{n}
\rput(2.507969,1.16){$D_0$}
\usefont{T1}{ptm}{m}{n}
\rput(3.5479689,1.16){$D_0$}
\psframe[linewidth=0.04,dimen=outer,fillstyle=solid,fillcolor=color1553b](5.148,2.41)(0.008,1.69)
\psline[linewidth=0.04cm](1.028,1.71)(1.028,2.37)
\psline[linewidth=0.04cm](2.048,1.71)(2.048,2.37)
\psline[linewidth=0.04cm](3.048,1.71)(3.048,2.37)
\usefont{T1}{ptm}{m}{n}
\rput(0.5151562,2.02){Disk 0}
\usefont{T1}{ptm}{m}{n}
\rput(1.5082812,2.02){Disk 1}
\usefont{T1}{ptm}{m}{n}
\rput(2.531875,2.02){Disk 2}
\usefont{T1}{ptm}{m}{n}
\rput(3.549375,2.02){Disk 3}
\usefont{T1}{ptm}{m}{n}
\rput(0.5079688,0.34){$D_1$}
\usefont{T1}{ptm}{m}{n}
\rput(1.5279686,0.34){$D_1$}
\usefont{T1}{ptm}{m}{n}
\rput(2.5079687,0.34){$D_1$}
\usefont{T1}{ptm}{m}{n}
\rput(3.5279691,0.34){$P_1$}
\usefont{T1}{ptm}{m}{n}
\rput(0.5079688,-0.46){$D_2$}
\usefont{T1}{ptm}{m}{n}
\rput(1.5279686,-0.46){$D_2$}
\usefont{T1}{ptm}{m}{n}
\rput(2.487969,-0.46){$P_2$}
\usefont{T1}{ptm}{m}{n}
\rput(3.547969,-0.46){$D_2$}
\usefont{T1}{ptm}{m}{n}
\rput(0.5079688,-1.24){$D_3$}
\usefont{T1}{ptm}{m}{n}
\rput(1.5279686,-1.24){$P_3$}
\usefont{T1}{ptm}{m}{n}
\rput(2.507969,-1.24){$D_3$}
\usefont{T1}{ptm}{m}{n}
\rput(3.547969,-1.24){$D_3$}
\psline[linewidth=0.04cm](4.068,1.71)(4.068,2.37)
\usefont{T1}{ptm}{m}{n}
\rput(4.618125,2.02){Disk 4}
\psline[linewidth=0.04cm](4.08,1.5502402)(4.068,-2.39)
\usefont{T1}{ptm}{m}{n}
\rput(4.587969,1.16){$P_0$}
\usefont{T1}{ptm}{m}{n}
\rput(4.6079693,0.34){$D_1$}
\usefont{T1}{ptm}{m}{n}
\rput(4.6079693,-0.46){$D_2$}
\usefont{T1}{ptm}{m}{n}
\rput(4.6079693,-1.24){$D_3$}
\psline[linewidth=0.04cm](0.04,-1.62)(5.128,-1.63)
\usefont{T1}{ptm}{m}{n}
\rput(0.4879688,-2.04){$P_4$}
\usefont{T1}{ptm}{m}{n}
\rput(1.5479686,-2.04){$D_4$}
\usefont{T1}{ptm}{m}{n}
\rput(2.507969,-2.04){$D_4$}
\usefont{T1}{ptm}{m}{n}
\rput(3.547969,-2.04){$D_4$}
\usefont{T1}{ptm}{m}{n}
\rput(4.6079693,-2.04){$D_4$}
\end{pspicture} 
}
\caption{An array code with no parity declustering}
\label{fig:RAID-4}
\end{figure}   

Let us consider the following example. 
Suppose there are five disks in the disk array. Each disk is divided into several units. They are
either data units ($D$) or parity units ($P$). Each parity group consists of four data units
and one parity unit (those that have the same index). 
The parity unit is equal to the XOR-sum of the data units in 
the same parity group. 
The array in Fig.~\ref{fig:RAID-4} represents the \emph{basic} 
data/parity layout in this disk array.   
The basic layout is then repeated many times until every unit in each disk
is covered.
This data/parity layout is called an \emph{array code} for the disk array. 
Column~$i$ of the array code corresponds to Disk~$i$ in the disk array
that employs the array code. A data/parity entry in Column $i$ represents a data/parity unit in Disk~$i$.  
Without loss of generality, we assume that the disk array consists of only one
copy of the data/parity layout from the array code. 
In other words, we assume that the data/parity layout of the disk array looks completely 
the same as the data/parity layout of the array code. 
Then, throughout this work, we often use disks and columns, 
units and entries, interchangeably.  

The array code presented in Fig.~\ref{fig:RAID-4} can recover \emph{one} missing column. Hence, the disk array that employs this array code 
can tolerate \emph{one} disk failure. 
The reconstruction process of the lost column (disk) requires access to \emph{all} entries (units) in every surviving column (disk). 

The parity declustering technique for one-failure tolerant array codes based on BIBDs was originally suggested by Muntz and Lui \cite{MuntzLui1990} and investigated in details by Holland and Gibson \cite{HollandGibson1992}, Ng and Mattson \cite{NgMattson1992}, and Reddy and Banerjee \cite{ReddyBannerjee1991}. Before describing this technique, we need the definitions of $t$-designs and BIBDs. 

\vskip 10pt 
\begin{definition}
A $t$-$(n,k,\lam)$ \emph{design}, a $t$-design in short, is a pair $(\X, \B)$ where $\X$ is a set of $n$ \emph{points}
and $\B$ is a collection of $k$-subsets of $\X$ (\emph{blocks}) with the property that every $t$-subset of $\X$ is contained in exactly $\lam$ blocks. A $2$-$(n,k,\lam)$ design is also called a \emph{balanced incomplete block design} (BIBD). 
\end{definition} 

\begin{figure}[h]
\center
\scalebox{1} % Change this value to rescale the drawing.
{
\begin{pspicture}(0,-2.065)(5.099841,2.045)
\definecolor{color1553b}{rgb}{0.8,0.8,0.8}
\psframe[linewidth=0.04,dimen=outer](5.08,1.225)(0.012,-2.045)
\psline[linewidth=0.04cm](0.056989312,0.395)(5.0798407,0.375)
\psline[linewidth=0.04cm](0.032,-0.405)(5.0798407,-0.405)
\psline[linewidth=0.04cm](0.056989312,-1.205)(5.0798407,-1.205)
\psline[linewidth=0.04cm](1.032,1.2052401)(1.052,-2.025)
\psline[linewidth=0.04cm](2.032,1.2052401)(2.052,-2.025)
\psline[linewidth=0.04cm](3.032,1.2052401)(3.052,-2.025)
\usefont{T1}{ptm}{m}{n}
\rput(0.51856256,0.795){$D_0$}
\usefont{T1}{ptm}{m}{n}
\rput(1.5385624,0.795){$D_0$}
\usefont{T1}{ptm}{m}{n}
\rput(2.5185626,0.795){$D_0$}
\usefont{T1}{ptm}{m}{n}
\rput(3.5385625,0.795){$P_0$}
\psframe[linewidth=0.04,dimen=outer,fillstyle=solid,fillcolor=color1553b](5.08,2.045)(0.0,1.325)
\psline[linewidth=0.04cm](1.02,1.345)(1.02,2.005)
\psline[linewidth=0.04cm](2.04,1.345)(2.04,2.005)
\psline[linewidth=0.04cm](3.04,1.345)(3.04,2.005)
\usefont{T1}{ptm}{m}{n}
\rput(0.51668745,1.655){Disk 0}
\usefont{T1}{ptm}{m}{n}
\rput(1.5254375,1.655){Disk 1}
\usefont{T1}{ptm}{m}{n}
\rput(2.5345,1.655){Disk 2}
\usefont{T1}{ptm}{m}{n}
\rput(3.5595,1.655){Disk 3}
\usefont{T1}{ptm}{m}{n}
\rput(0.51856256,-0.025){$D_1$}
\usefont{T1}{ptm}{m}{n}
\rput(1.5385624,-0.025){$D_1$}
\usefont{T1}{ptm}{m}{n}
\rput(2.5185626,-0.025){$D_1$}
\usefont{T1}{ptm}{m}{n}
\rput(3.5585628,-0.025){$D_2$}
\usefont{T1}{ptm}{m}{n}
\rput(0.51856256,-0.825){$D_2$}
\usefont{T1}{ptm}{m}{n}
\rput(1.5385624,-0.825){$D_2$}
\usefont{T1}{ptm}{m}{n}
\rput(2.5185626,-0.825){$D_3$}
\usefont{T1}{ptm}{m}{n}
\rput(3.5585628,-0.825){$D_3$}
\usefont{T1}{ptm}{m}{n}
\rput(0.51856256,-1.605){$D_3$}
\usefont{T1}{ptm}{m}{n}
\rput(1.5385624,-1.605){$D_4$}
\usefont{T1}{ptm}{m}{n}
\rput(2.5185626,-1.605){$D_4$}
\usefont{T1}{ptm}{m}{n}
\rput(3.5585628,-1.605){$D_4$}
\psline[linewidth=0.04cm](4.072,1.1852401)(4.092,-2.045)
\psline[linewidth=0.04cm](4.04,1.345)(4.04,2.005)
\usefont{T1}{ptm}{m}{n}
\rput(4.56825,1.655){Disk 4}
\usefont{T1}{ptm}{m}{n}
\rput(4.5585628,0.795){$P_1$}
\usefont{T1}{ptm}{m}{n}
\rput(4.5585628,-0.025){$P_2$}
\usefont{T1}{ptm}{m}{n}
\rput(4.5585628,-0.825){$P_3$}
\usefont{T1}{ptm}{m}{n}
\rput(4.5585628,-1.605){$P_4$}
\end{pspicture} 
}
\caption{An array code with parity declustering}
\label{fig:RAID-4-parity-declustering}
\end{figure}

Given a $2$-$(n,k,\lam)$ design, we associate disks with points and parity groups with blocks. 
As an illustrative example, consider a $2$-$(5,4,3)$ design with $\X = \{0,1,2,3,4\}$ and 
$\B$ consisting of five blocks: $\{0,1,2,3\}$,
$\{0,1,2,4\}$, $\{0,1,3,4\}$, $\{0,2,3,4\}$, $\{1,2,3,4\}$. 
Each block corresponds to one parity group. For instance, the block
$\{1,2,3,4\}$ corresponds to a parity group with the (three) data units being
located in Disks 1, 2, 3 and the parity unit located in Disk 4. The data layout of the array code
is presented in Fig.~\ref{fig:RAID-4-parity-declustering}. 
Furthermore, we can balance the number of parity units in every column by rotating
the array in this figure cyclically five times (see \cite{HollandGibson1992}).   

Since every two elements in the
set $\{0,1,2,3,4\}$ appears in precisely $\lam=3$ different blocks, every two disks share three
pairs of units, where units in each pair belong to the same parity group. Therefore, when 
one disk fails, precisely three units in each surviving disk need to be read 
for the recovery of units on the failed disk. 
Thus, instead of reading $100\%$ units in each surviving disk 
(as for the array code in Fig.~\ref{fig:RAID-4}), the reconstruction
process now reads $75\%$ units in each disk. In other words, by increasing the overhead
for the storage of parity (from a $1/5$ fraction of the space to a $1/4$ fraction), 
we can reduce the percentage of data that needs to be read in each surviving disk for recovery.  
However, we lose the MDS property of the code while spreading out the workload over more disks.
Now it requires $1.25$ disks worth of parity (see Section~\ref{subsec:trade-off} for a formal definition) 
instead of just one parity disk as in the previous example.   
Therefore, the parity declustering technique can be considered as a way to sacrifice the efficiency 
for faster reconstruction time. 

The connection between the reconstruction of one-disk failure and a $2$-design is elaborated further 
as follows. 
If a parity group $G$ contains a unit from a disk then that disk is said to be \emph{crossed} by $G$. 
The reconstruction of one unit requires access to all other units in the same parity group.  
Therefore, in order to have uniform workloads during the reconstruction for one disk failure, 
every two disks must share the same number of pairs of units that are from the same parity groups.
In other words, every two disks must be simultaneously crossed by the same number of parity groups.
If disks and parity groups are associated to points and blocks, respectively, then the aforementioned
property of the data layout becomes the familiar requirement for a $2$-design: every two points must
be simultaneously contained in the same number of blocks. Thus, the parity declustering technique for one-failure tolerant
array codes can be summarized as follows: \\

\nin {\bf Algorithm 1} (\cite{MuntzLui1990, HollandGibson1992, NgMattson1992, ReddyBannerjee1991})
\begin{itemize}
  \item {\bf Input:} $n$ is the number of physical disks in the array and $k$ is the parity group size.
	\item {\bf Step 1:} Choose a parity group $G$ with $k-1$ data units and one parity unit.
	\item {\bf Step 2:} Choose a $2$-$(n,k,\lam)$ design $\D = (\X,\B)$ for some $\lam$.
	\item {\bf Step 3:} For each block $B_i = \{b_{i,0},\ldots,b_{i,k-1}\} \in \B$, $0 \leq i < |\B|$, 
	create a parity group $G_i$ as follows. Firstly, $G_i$ must have the same data-parity pattern as $G$. 
	In other words, $G_i$
	has $k-1$ data units and one parity unit, and the parity unit is equal to the XOR-sum of the data units. 
	Secondly, the $k-1$ data units of $G_i$ are located on disks with labels $b_{i,0}, \ldots, b_{i,k-2}$. 
	The parity unit of $G_i$ is located on disk with label $b_{i,k-1}$.
	\item {\bf Output:} The $n$-disk array with $|\B|$ parity groups and their layouts according to Step~3.  
\end{itemize}

After employing Algorithm~1, as shown in \cite{HollandGibson1992}, the number of parity units in every column can be made balanced by rotating the resulting array
cyclically $n$ times.   

In the next sections, we generalize this procedure to construct declustered-parity layouts for 
array codes that tolerate more than one disk failure. 

\section{Parity Declustering for Two-Failure Tolerant Codes via $3$-Designs} 
\label{sec:pd-2-failures}

To extend the parity declustering technique for two-failure tolerant codes, 
we use balanced $2$-parity groups instead of parity groups. 

\subsection{$\delta$-Parity Groups}
\label{subsec:delta-parity-group}

\begin{definition} 
\label{def:pg}
A \emph{$\delta$-parity group} is an MDS $\delta$-failure tolerant array code. 
More formally, a \emph{$\delta$-parity group} is an $m \times k$ array that satisfies the following conditions:
\begin{itemize}
	\item[(C1)] it contains $(k - \delta)m$ data entries and $\delta m$ parity entries; 
  \item[(C2)] entries in at most $\delta$ columns can always be reconstructed from 
	the entries in other columns. 
\end{itemize}
Moreover, if a $\delta$-parity group also satisfies the two other conditions
\begin{itemize}
	\item[(C3)] for the reconstruction of entries in at most $\delta$ columns, the number of entries
	in every other column that contribute to the calculation must always be the same;
	\item[(C4)] the number of parity entries in every column must be the same, 
\end{itemize}
then it is said to be \emph{balanced}. If a $\delta$-parity group does not 
satisfy either (C3) or (C4) then it is said to be \emph{unbalanced}.
We refer to $k$ as the \emph{size} and $m$ as the \emph{depth}, 
respectively, of the $\delta$-parity group. 
\end{definition}
\vskip 10pt 

Note that the condition (C3) depends on the particular reconstruction algorithm 
used for the $\delta$-parity group. Therefore, a $\delta$-parity group can be 
balanced or unbalanced when different reconstruction algorithms are employed.    
In fact, all MDS two-failure tolerant array codes, such as Reed-Solomon (RS) codes \cite{Plank1997},
EVENODD \cite{BlaumBradyBruckMenon1995},
RDP \cite{Corbett2004}, B-code \cite{XuBohossianBruckWagner1998}, 
P-codes \cite{JinJiangFengTian2009}, X-codes \cite{XuBruck2008}, are $2$-parity groups. 
However, they are not yet balanced in their original form. 
The \emph{vertical} codes (B-, P-, X-codes), which contain \emph{both} data and parity units
in each column, 
equipped with their conventional reconstruction algorithms for
one failure, satisfy (C4) but not (C3). The \emph{horizontal} codes (RS, EVENODD, RDP), 
which contain \emph{either} data \emph{or} parity units in each column, in their original
form satisfy neither (C3) nor (C4). The following example shows how to modify 
the existing MDS horizontal codes to obtain balanced $2$-parity groups. 

\begin{example}
\label{ex:bpg-1}

We first consider RDP codes.
Let $p$ be a prime. RDP code for a $(p+1)$-disk array is defined as a $(p-1)\times(p+1)$
array \cite{Corbett2004} (see Fig~\ref{fig:RDP}). Its first $p-1$ columns (disks) store data entries (units) and its last two columns (disks)
store parity entries (units). The first parity column ($P$-column) stores the row-parity entries; each of such entries 
is equal to the XOR-sum of the data entries on the same row. The second parity column ($Q$-column)
stores the diagonal-parity entries; each of such entries is equal to the XOR-sum of the data 
and row-parity entries along some diagonal of the array. Note that one diagonal is not used (called 
the \emph{missing} diagonal in \cite{Corbett2004}). 

\begin{figure}[h]
	\scalebox{0.9} % Change this value to rescale the drawing.
{
\begin{pspicture}(0,-3.137)(9.139844,3.165)
\definecolor{color1553b}{rgb}{0.8,0.8,0.8}
\psframe[linewidth=0.04,dimen=outer](4.68,1.645)(0.6,-1.625)
\psline[linewidth=0.04cm](0.64,0.815)(4.66,0.795)
\psline[linewidth=0.04cm](0.62,0.015)(4.66,0.015)
\psline[linewidth=0.04cm](0.64,-0.785)(4.66,-0.785)
\psline[linewidth=0.04cm](1.62,1.6252402)(1.64,-1.605)
\psline[linewidth=0.04cm](2.62,1.6252402)(2.64,-1.605)
\psline[linewidth=0.04cm](3.62,1.6252402)(3.64,-1.605)
\psframe[linewidth=0.04,dimen=outer](6.7,1.645)(5.64,-1.6286318)
\psframe[linewidth=0.04,dimen=outer](8.84,1.645)(7.78,-1.645)
\psline[linewidth=0.04cm](5.68,0.775)(6.68,0.775)
\psline[linewidth=0.04cm](5.68,-0.005)(6.68,-0.005)
\psline[linewidth=0.04cm](5.68,-0.805)(6.68,-0.805)
\psline[linewidth=0.04cm](7.82,0.795)(8.82,0.795)
\psline[linewidth=0.04cm](7.82,-0.005)(8.82,-0.005)
\psline[linewidth=0.04cm](7.82,-0.805)(8.82,-0.805)
\usefont{T1}{ptm}{m}{n}
\rput(1.0493749,1.215){$D_{0,0}$}
\usefont{T1}{ptm}{m}{n}
\rput(1.0293748,0.395){$D_{1,0}$}
\usefont{T1}{ptm}{m}{n}
\rput(1.0693749,-0.365){$D_{2,0}$}
\usefont{T1}{ptm}{m}{n}
\rput(1.0693749,-1.185){$D_{3,0}$}
\usefont{T1}{ptm}{m}{n}
\rput(2.0693748,1.215){$D_{0,1}$}
\usefont{T1}{ptm}{m}{n}
\rput(2.0493748,0.395){$D_{1,1}$}
\usefont{T1}{ptm}{m}{n}
\rput(2.0893748,-0.365){$D_{2,1}$}
\usefont{T1}{ptm}{m}{n}
\rput(2.0893748,-1.185){$D_{3,1}$}
\usefont{T1}{ptm}{m}{n}
\rput(3.0493748,1.215){$D_{0,2}$}
\usefont{T1}{ptm}{m}{n}
\rput(3.0493748,0.395){$D_{1,2}$}
\usefont{T1}{ptm}{m}{n}
\rput(3.0893748,-0.365){$D_{2,2}$}
\usefont{T1}{ptm}{m}{n}
\rput(3.0893748,-1.185){$D_{3,2}$}
\usefont{T1}{ptm}{m}{n}
\rput(4.089375,1.215){$D_{0,3}$}
\usefont{T1}{ptm}{m}{n}
\rput(4.089375,0.395){$D_{1,3}$}
\usefont{T1}{ptm}{m}{n}
\rput(4.1293755,-0.365){$D_{2,3}$}
\usefont{T1}{ptm}{m}{n}
\rput(4.1293755,-1.185){$D_{3,3}$}
\usefont{T1}{ptm}{m}{n}
\rput(6.089375,1.195){$P_{0,4}$}
\usefont{T1}{ptm}{m}{n}
\rput(6.1093755,0.395){$P_{1,4}$}
\usefont{T1}{ptm}{m}{n}
\rput(6.1093755,-0.365){$P_{2,4}$}
\usefont{T1}{ptm}{m}{n}
\rput(6.1093755,-1.185){$P_{3,4}$}
\usefont{T1}{ptm}{m}{n}
\rput(8.269375,1.215){$Q_{0,5}$}
\usefont{T1}{ptm}{m}{n}
\rput(8.289375,0.395){$Q_{1,5}$}
\usefont{T1}{ptm}{m}{n}
\rput(8.289375,-0.365){$Q_{2,5}$}
\usefont{T1}{ptm}{m}{n}
\rput(8.289375,-1.185){$Q_{3,5}$}
\psline[linewidth=0.024,arrowsize=0.05291667cm 2.0,arrowlength=1.4,arrowinset=0.4]{->}(1.1,1.215)(0.0,0.275)(0.0,-3.105)(0.72,-3.125)(4.14,-0.405)(6.2,0.375)(7.32,0.375)(7.8,1.195)
\usefont{T1}{ptm}{m}{n}
\rput(7.1056247,0.385){$\bigoplus$}
\psline[linewidth=0.024,arrowsize=0.05291667cm 2.0,arrowlength=1.4,arrowinset=0.4]{->}(2.14,1.215)(0.18,-0.345)(0.2,-2.705)(2.26,-2.725)(4.12,-1.185)(6.18,-0.405)(7.3,-0.405)(7.78,0.415)
\usefont{T1}{ptm}{m}{n}
\rput(7.0856247,-0.415){$\bigoplus$}
\psline[linewidth=0.024,arrowsize=0.05291667cm 2.0,arrowlength=1.4,arrowinset=0.4]{->}(3.18,1.195)(0.4,-0.965)(0.42,-2.245)(3.2,-2.285)(4.14,-1.945)(6.2,-1.165)(7.32,-1.165)(7.8,-0.345)
\usefont{T1}{ptm}{m}{n}
\rput(7.1056247,-1.155){$\bigoplus$}
\psline[linewidth=0.024,arrowsize=0.05291667cm 2.0,arrowlength=1.4,arrowinset=0.4]{->}(4.12,1.215)(1.16,-1.205)(1.18,-1.945)(2.96,-1.965)(4.42,-1.945)(6.2,-1.965)(7.32,-1.965)(7.8,-1.145)
\usefont{T1}{ptm}{m}{n}
\rput(7.1056247,-1.955){$\bigoplus$}
\usefont{T1}{ptm}{m}{n}
\rput(5.0856247,1.165){$\bigoplus$}
\psline[linewidth=0.024cm,arrowsize=0.05291667cm 2.0,arrowlength=1.4,arrowinset=0.4]{->}(4.66,1.155)(5.64,1.175)
\usefont{T1}{ptm}{m}{n}
\rput(5.0856247,0.405){$\bigoplus$}
\psline[linewidth=0.024cm,arrowsize=0.05291667cm 2.0,arrowlength=1.4,arrowinset=0.4]{->}(4.66,0.395)(5.64,0.415)
\usefont{T1}{ptm}{m}{n}
\rput(5.0856247,-0.395){$\bigoplus$}
\psline[linewidth=0.024cm,arrowsize=0.05291667cm 2.0,arrowlength=1.4,arrowinset=0.4]{->}(4.66,-0.405)(5.64,-0.385)
\usefont{T1}{ptm}{m}{n}
\rput(5.0856247,-1.215){$\bigoplus$}
\psline[linewidth=0.024cm,arrowsize=0.05291667cm 2.0,arrowlength=1.4,arrowinset=0.4]{->}(4.66,-1.225)(5.64,-1.205)
\psbezier[linewidth=0.02](2.6275833,2.8111668)(2.5564165,2.5072575)(0.44141656,2.817332)(0.62861794,2.5336673)
\psbezier[linewidth=0.02](2.6103694,2.8112152)(2.7112622,2.4963343)(4.8998194,2.8047588)(4.656753,2.532794)
\usefont{T1}{ptm}{m}{n}
\rput(2.6125,2.985){Data Columns}
\psbezier[linewidth=0.02](7.2172103,2.7711668)(7.162007,2.4672575)(5.5214167,2.777332)(5.6666274,2.4936674)
\psbezier[linewidth=0.02](7.2038574,2.7712152)(7.2821193,2.4563344)(8.979768,2.7647586)(8.791223,2.4927938)
\usefont{T1}{ptm}{m}{n}
\rput(7.203125,2.985){Parity Columns}
\psframe[linewidth=0.04,dimen=outer,fillstyle=solid,fillcolor=color1553b](4.668,2.465)(0.588,1.745)
\psline[linewidth=0.04cm](1.608,1.765)(1.608,2.425)
\psline[linewidth=0.04cm](2.628,1.765)(2.628,2.425)
\psline[linewidth=0.04cm](3.628,1.765)(3.628,2.425)
\usefont{T1}{ptm}{m}{n}
\rput(1.08875,2.075){Col. 0}
\usefont{T1}{ptm}{m}{n}
\rput(2.1262498,2.075){Col. 1}
\usefont{T1}{ptm}{m}{n}
\rput(3.104375,2.075){Col. 2}
\usefont{T1}{ptm}{m}{n}
\rput(4.134375,2.075){Col. 3}
\psframe[linewidth=0.04,dimen=outer,fillstyle=solid,fillcolor=color1553b](6.688,2.445)(5.628,1.725)
\usefont{T1}{ptm}{m}{n}
\rput(6.160625,2.055){Col. 4}
\psframe[linewidth=0.04,dimen=outer,fillstyle=solid,fillcolor=color1553b](8.828,2.465)(7.768,1.745)
\usefont{T1}{ptm}{m}{n}
\rput(8.28375,2.075){Col. 5}
\end{pspicture} 
}
\caption{RDP array with $p = 5$ (reproduced from \cite{Xiang-et-al2011})}
	\label{fig:RDP}
\end{figure}

Below we show that the RDP array is not a balanced $2$-parity group.  
The reconstruction rule for RDP (\cite{Corbett2004}) is as follows. 
Suppose one column is lost. 
If it is a data column ($D$), then each of its entries can be recovered by taking the XOR-sum of 
the data entries in other data columns ($D$) and the row parity entry on the $P$-column 
that belong to the same row. In this way, the $Q$-column plays no role in the reconstruction of 
one lost data column. If the $P$-column or the $Q$-column is lost, then its entries can be reconstructed 
by recalculating the parities according to the encoding rule of RDP. Note that the reconstruction
of the $P$-column does \emph{not} require access to the $Q$-column, and vice versa. 
Hence, the RDP array and its conventional reconstruction rule does not qualify
as a balanced $2$-parity group. 

However, we can transform an RDP array into a
balanced $2$-parity group as follows.
Let us first label the data columns by '$D$' and the parity columns by '$P$' and '$Q$', respectively. 
As an example, the RDP array ($p = 5$) in its simplified layout is depicted in Fig.~\ref{fig:RDP-simple}. 

\begin{figure}[h]
\centering
\scalebox{1}
{
\begin{pspicture}(0,-0.535)(6.04,0.535)
\definecolor{color1553b}{rgb}{0.8,0.8,0.8}
\usefont{T1}{ptm}{m}{n}
\rput(0.48221874,-0.305){$D$}
\usefont{T1}{ptm}{m}{n}
\rput(1.4822187,-0.305){$D$}
\usefont{T1}{ptm}{m}{n}
\rput(2.5022187,-0.305){$D$}
\usefont{T1}{ptm}{m}{n}
\rput(3.5022187,-0.305){$D$}
\usefont{T1}{ptm}{m}{n}
\rput(4.4822187,-0.305){$P$}
\usefont{T1}{ptm}{m}{n}
\rput(5.4822187,-0.285){$Q$}
\psframe[linewidth=0.04,dimen=outer,fillstyle=solid,fillcolor=color1553b](1.06,0.535)(0.02,0.065)
\psframe[linewidth=0.04,dimen=outer,fillstyle=solid,fillcolor=color1553b](2.06,0.535)(1.02,0.065)
\psframe[linewidth=0.04,dimen=outer,fillstyle=solid,fillcolor=color1553b](3.06,0.535)(2.02,0.065)
\psframe[linewidth=0.04,dimen=outer,fillstyle=solid,fillcolor=color1553b](4.06,0.535)(3.02,0.065)
\psframe[linewidth=0.04,dimen=outer,fillstyle=solid,fillcolor=color1553b](5.06,0.535)(4.02,0.065)
\psframe[linewidth=0.04,dimen=outer](1.04,-0.065)(0.0,-0.535)
\psframe[linewidth=0.04,dimen=outer](2.04,-0.065)(1.0,-0.535)
\psframe[linewidth=0.04,dimen=outer](3.04,-0.065)(2.0,-0.535)
\psframe[linewidth=0.04,dimen=outer](4.04,-0.065)(3.0,-0.535)
\psframe[linewidth=0.04,dimen=outer](5.04,-0.065)(4.0,-0.535)
\psframe[linewidth=0.04,dimen=outer,fillstyle=solid,fillcolor=color1553b](6.04,0.535)(5.0,0.065)
\psframe[linewidth=0.04,dimen=outer](6.04,-0.065)(5.0,-0.535)
\usefont{T1}{ptm}{m}{n}
\rput(0.53406245,0.285){Col. 0}
\usefont{T1}{ptm}{m}{n}
\rput(1.5228126,0.285){Col. 1}
\usefont{T1}{ptm}{m}{n}
\rput(2.531875,0.285){Col. 2}
\usefont{T1}{ptm}{m}{n}
\rput(3.536875,0.285){Col. 3}
\usefont{T1}{ptm}{m}{n}
\rput(4.5343747,0.285){Col. 4}
\usefont{T1}{ptm}{m}{n}
\rput(5.523125,0.285){Col. 5}
\end{pspicture} 
}
\caption{An RDP array with $p=5$ (simplified layout)}
\label{fig:RDP-simple}
\end{figure}

We consider all possible ways to arrange the $P$-column and the $Q$-column
among all $k$ columns ($k = p+1$). There are $k(k-1)$
such arrangements.
If $k = 6$ then there are $30 = 6 \times 5$ possible such arrangements.  
For each of such arrangements of $P$- and $Q$-columns, we obtain a new array, $\A_i$, $0 \leq i < k(k-1)$. 
We juxtapose all these arrays vertically to obtain a new array $\G$, which contains 
$k(k-1)$ times more rows than the original RDP array (see Fig.~\ref{fig:rdp-juxtapose} for the case when $k = 6$). 

%%%%%%%%%%%%%%%%%%%%%%%
\begin{figure}[h]
\centering
\scalebox{1} % Change this value to rescale the drawing.
{
\begin{pspicture}(0,-2.415)(8.157187,2.415)
\definecolor{color1984b}{rgb}{0.8,0.8,0.8}
\usefont{T1}{ptm}{m}{n}
\rput(2.5794063,1.575){$D$}
\usefont{T1}{ptm}{m}{n}
\rput(3.5794063,1.575){$D$}
\usefont{T1}{ptm}{m}{n}
\rput(4.5994062,1.575){$D$}
\usefont{T1}{ptm}{m}{n}
\rput(5.5994062,1.575){$D$}
\usefont{T1}{ptm}{m}{n}
\rput(6.5794063,1.575){$P$}
\usefont{T1}{ptm}{m}{n}
\rput(7.5794067,1.575){$Q$}
\psframe[linewidth=0.04,dimen=outer,fillstyle=solid,fillcolor=color1984b](3.1757812,2.415)(2.1357813,1.945)
\psframe[linewidth=0.04,dimen=outer,fillstyle=solid,fillcolor=color1984b](4.1757812,2.415)(3.1357813,1.945)
\psframe[linewidth=0.04,dimen=outer,fillstyle=solid,fillcolor=color1984b](5.1757812,2.415)(4.1357813,1.945)
\psframe[linewidth=0.04,dimen=outer,fillstyle=solid,fillcolor=color1984b](6.1757812,2.415)(5.1357813,1.945)
\psframe[linewidth=0.04,dimen=outer,fillstyle=solid,fillcolor=color1984b](7.1757812,2.415)(6.1357813,1.945)
\psframe[linewidth=0.04,dimen=outer](3.1557813,1.815)(2.1157813,1.345)
\psframe[linewidth=0.04,dimen=outer](4.1557813,1.815)(3.1157813,1.345)
\psframe[linewidth=0.04,dimen=outer](5.1557813,1.815)(4.1157813,1.345)
\psframe[linewidth=0.04,dimen=outer](6.1557813,1.815)(5.1157813,1.345)
\psframe[linewidth=0.04,dimen=outer](7.1557813,1.815)(6.1157813,1.345)
\psframe[linewidth=0.04,dimen=outer,fillstyle=solid,fillcolor=color1984b](8.155782,2.415)(7.1157813,1.945)
\psframe[linewidth=0.04,dimen=outer](8.155782,1.815)(7.1157813,1.345)
\usefont{T1}{ptm}{m}{n}
\rput(2.616875,2.165){Col. 0}
\usefont{T1}{ptm}{m}{n}
\rput(3.61,2.165){Col. 1}
\usefont{T1}{ptm}{m}{n}
\rput(4.613594,2.165){Col. 2}
\usefont{T1}{ptm}{m}{n}
\rput(5.6110935,2.165){Col. 3}
\usefont{T1}{ptm}{m}{n}
\rput(6.6173434,2.165){Col. 4}
\usefont{T1}{ptm}{m}{n}
\rput(7.600469,2.165){Col. 5}
\usefont{T1}{ptm}{m}{n}
\rput(0.90234375,1.55){$DDDDPQ$}
\usefont{T1}{ptm}{m}{n}
\rput(0.90234375,1.11){$DDDDQP$}
\usefont{T1}{ptm}{m}{n}
\rput(0.90234375,0.67){$DDDPDQ$}
\usefont{T1}{ptm}{m}{n}
\rput(0.90234375,0.23){$DDDQDP$}
\usefont{T1}{ptm}{m}{n}
\rput(0.90234375,-0.21){$DDPDDQ$}
\usefont{T1}{ptm}{m}{n}
\rput(0.90234375,-0.65){$DDQDDP$}
\usefont{T1}{ptm}{m}{n}
\rput(0.90234375,-1.71){$PQDDDD$}
\usefont{T1}{ptm}{m}{n}
\rput(0.90234375,-2.15){$QPDDDD$}
\usefont{T1}{ptm}{m}{n}
\rput(2.5794063,1.135){$D$}
\usefont{T1}{ptm}{m}{n}
\rput(3.5794063,1.135){$D$}
\usefont{T1}{ptm}{m}{n}
\rput(4.5994062,1.135){$D$}
\usefont{T1}{ptm}{m}{n}
\rput(5.5994062,1.135){$D$}
\usefont{T1}{ptm}{m}{n}
\rput(6.5994062,1.135){$Q$}
\usefont{T1}{ptm}{m}{n}
\rput(7.5594063,1.155){$P$}
\psframe[linewidth=0.04,dimen=outer](3.1557813,1.375)(2.1157813,0.905)
\psframe[linewidth=0.04,dimen=outer](4.1557813,1.375)(3.1157813,0.905)
\psframe[linewidth=0.04,dimen=outer](5.1557813,1.375)(4.1157813,0.905)
\psframe[linewidth=0.04,dimen=outer](6.1557813,1.375)(5.1157813,0.905)
\psframe[linewidth=0.04,dimen=outer](7.1557813,1.375)(6.1157813,0.905)
\psframe[linewidth=0.04,dimen=outer](8.155782,1.375)(7.1157813,0.905)
\usefont{T1}{ptm}{m}{n}
\rput(2.5794063,0.695){$D$}
\usefont{T1}{ptm}{m}{n}
\rput(3.5794063,0.695){$D$}
\usefont{T1}{ptm}{m}{n}
\rput(4.5994062,0.695){$D$}
\usefont{T1}{ptm}{m}{n}
\rput(6.5794063,0.675){$D$}
\usefont{T1}{ptm}{m}{n}
\rput(5.5994062,0.695){$P$}
\usefont{T1}{ptm}{m}{n}
\rput(7.5794067,0.695){$Q$}
\psframe[linewidth=0.04,dimen=outer](3.1557813,0.935)(2.1157813,0.465)
\psframe[linewidth=0.04,dimen=outer](4.1557813,0.935)(3.1157813,0.465)
\psframe[linewidth=0.04,dimen=outer](5.1557813,0.935)(4.1157813,0.465)
\psframe[linewidth=0.04,dimen=outer](6.1557813,0.935)(5.1157813,0.465)
\psframe[linewidth=0.04,dimen=outer](7.1557813,0.935)(6.1157813,0.465)
\psframe[linewidth=0.04,dimen=outer](8.155782,0.935)(7.1157813,0.465)
\usefont{T1}{ptm}{m}{n}
\rput(2.5794063,0.255){$D$}
\usefont{T1}{ptm}{m}{n}
\rput(3.5794063,0.255){$D$}
\usefont{T1}{ptm}{m}{n}
\rput(4.5994062,0.255){$D$}
\usefont{T1}{ptm}{m}{n}
\rput(6.5794063,0.255){$D$}
\usefont{T1}{ptm}{m}{n}
\rput(7.5994062,0.235){$P$}
\usefont{T1}{ptm}{m}{n}
\rput(5.5794067,0.255){$Q$}
\psframe[linewidth=0.04,dimen=outer](3.1557813,0.495)(2.1157813,0.025)
\psframe[linewidth=0.04,dimen=outer](4.1557813,0.495)(3.1157813,0.025)
\psframe[linewidth=0.04,dimen=outer](5.1557813,0.495)(4.1157813,0.025)
\psframe[linewidth=0.04,dimen=outer](6.1557813,0.495)(5.1157813,0.025)
\psframe[linewidth=0.04,dimen=outer](7.1557813,0.495)(6.1157813,0.025)
\psframe[linewidth=0.04,dimen=outer](8.155782,0.495)(7.1157813,0.025)
\usefont{T1}{ptm}{m}{n}
\rput(2.5794063,-0.185){$D$}
\usefont{T1}{ptm}{m}{n}
\rput(3.5794063,-0.185){$D$}
\usefont{T1}{ptm}{m}{n}
\rput(6.5794063,-0.185){$D$}
\usefont{T1}{ptm}{m}{n}
\rput(5.5994062,-0.185){$D$}
\usefont{T1}{ptm}{m}{n}
\rput(4.5794063,-0.185){$P$}
\usefont{T1}{ptm}{m}{n}
\rput(7.5794067,-0.185){$Q$}
\psframe[linewidth=0.04,dimen=outer](3.1557813,0.055)(2.1157813,-0.415)
\psframe[linewidth=0.04,dimen=outer](4.1557813,0.055)(3.1157813,-0.415)
\psframe[linewidth=0.04,dimen=outer](5.1557813,0.055)(4.1157813,-0.415)
\psframe[linewidth=0.04,dimen=outer](6.1557813,0.055)(5.1157813,-0.415)
\psframe[linewidth=0.04,dimen=outer](7.1557813,0.055)(6.1157813,-0.415)
\psframe[linewidth=0.04,dimen=outer](8.155782,0.055)(7.1157813,-0.415)
\usefont{T1}{ptm}{m}{n}
\rput(2.5794063,-0.625){$D$}
\usefont{T1}{ptm}{m}{n}
\rput(3.5794063,-0.625){$D$}
\usefont{T1}{ptm}{m}{n}
\rput(6.5794063,-0.625){$D$}
\usefont{T1}{ptm}{m}{n}
\rput(5.5994062,-0.625){$D$}
\usefont{T1}{ptm}{m}{n}
\rput(7.5994062,-0.625){$P$}
\usefont{T1}{ptm}{m}{n}
\rput(4.5794067,-0.625){$Q$}
\psframe[linewidth=0.04,dimen=outer](3.1557813,-0.385)(2.1157813,-0.855)
\psframe[linewidth=0.04,dimen=outer](4.1557813,-0.385)(3.1157813,-0.855)
\psframe[linewidth=0.04,dimen=outer](5.1557813,-0.385)(4.1157813,-0.855)
\psframe[linewidth=0.04,dimen=outer](6.1557813,-0.385)(5.1157813,-0.855)
\psframe[linewidth=0.04,dimen=outer](7.1557813,-0.385)(6.1157813,-0.855)
\psframe[linewidth=0.04,dimen=outer](8.155782,-0.385)(7.1157813,-0.855)
\usefont{T1}{ptm}{m}{n}
\rput(6.5594063,-1.745){$D$}
\usefont{T1}{ptm}{m}{n}
\rput(7.5594063,-1.745){$D$}
\usefont{T1}{ptm}{m}{n}
\rput(4.5994062,-1.745){$D$}
\usefont{T1}{ptm}{m}{n}
\rput(5.5994062,-1.745){$D$}
\usefont{T1}{ptm}{m}{n}
\rput(2.5794063,-1.745){$P$}
\usefont{T1}{ptm}{m}{n}
\rput(3.5794065,-1.745){$Q$}
\psframe[linewidth=0.04,dimen=outer](3.1557813,-1.505)(2.1157813,-1.975)
\psframe[linewidth=0.04,dimen=outer](4.1557813,-1.505)(3.1157813,-1.975)
\psframe[linewidth=0.04,dimen=outer](5.1557813,-1.505)(4.1157813,-1.975)
\psframe[linewidth=0.04,dimen=outer](6.1557813,-1.505)(5.1157813,-1.975)
\psframe[linewidth=0.04,dimen=outer](7.1557813,-1.505)(6.1157813,-1.975)
\psframe[linewidth=0.04,dimen=outer](8.155782,-1.505)(7.1157813,-1.975)
\usefont{T1}{ptm}{m}{n}
\rput(6.5794063,-2.185){$D$}
\usefont{T1}{ptm}{m}{n}
\rput(7.5794063,-2.185){$D$}
\usefont{T1}{ptm}{m}{n}
\rput(4.5994062,-2.185){$D$}
\usefont{T1}{ptm}{m}{n}
\rput(5.5994062,-2.185){$D$}
\usefont{T1}{ptm}{m}{n}
\rput(3.5794063,-2.205){$P$}
\usefont{T1}{ptm}{m}{n}
\rput(2.5794065,-2.185){$Q$}
\psframe[linewidth=0.04,dimen=outer](3.1557813,-1.945)(2.1157813,-2.415)
\psframe[linewidth=0.04,dimen=outer](4.1557813,-1.945)(3.1157813,-2.415)
\psframe[linewidth=0.04,dimen=outer](5.1557813,-1.945)(4.1157813,-2.415)
\psframe[linewidth=0.04,dimen=outer](6.1557813,-1.945)(5.1157813,-2.415)
\psframe[linewidth=0.04,dimen=outer](7.1557813,-1.945)(6.1157813,-2.415)
\psframe[linewidth=0.04,dimen=outer](8.155782,-1.945)(7.1157813,-2.415)
\psframe[linewidth=0.04,dimen=outer](8.157187,-0.83)(2.1171875,-1.55)
\psline[linewidth=0.08cm,linestyle=dotted,dotsep=0.16cm](5.1,-0.925)(5.12,-1.485)
\psline[linewidth=0.08cm,linestyle=dotted,dotsep=0.16cm](0.92,-0.945)(0.94,-1.505)
\end{pspicture} 
}
\caption{The simplified layout of a balanced $2$-parity group $\G$ obtained from an RDP array ($p=5$)}
\label{fig:rdp-juxtapose}
\end{figure}
%%%%%%%%%%%%%%%%%%%%%%%%%%%%5

Our goal now is to show that the array $\G$ constructed above, together 
with RDP's reconstruction rule (\cite{Corbett2004}), in general, 
is a balanced $2$-parity group. 
The array $\G$ obviously satisfies (C1), (C2), and (C4). 
We only need to verify Condition (C3) for $\G$. 
To recover two missing columns in an RDP array, every other column has to 
be read in full. Hence, the reconstruction workload
for two missing column is already uniform across the columns
of $\G$. 
To recover one missing column in an RDP array, each of other columns either has to be
read in full or is not accessed at all. Therefore, it suffices to 
regard each (RDP) column $D$, $P$, or $Q$ as a single entry, or more
precisely, a \emph{column-entry}, in $\G$, and use the reconstruction rule 
for RDP as shown in Fig.~\ref{fig:rdp-reconstruction}. 
Those column-entries of $\G$ correspond to \emph{column-units}
on physical disks where each column-unit is a column
of data/parity units. 
%In the remaining of this example, we only consider the simplified layout 
%of $\G$ together with its .   

We refer to each $\A_i$ $(1 \leq i \leq k(k-1))$ as an \emph{extended-row} of $\G$.  
Then $\G$ has $k(k-1)$ extended-rows and each extended row contains $k$ 
column-entries. 
For instance, in Fig.~\ref{fig:rdp-juxtapose}, $\G$ has $30$
extended-rows and each extended-row contains $6$ column-entries.  

For two distinct columns $i$ and $j$ of $\G$, we define the following quantities: 
\begin{itemize}
	\item $r_{DQ}$: the number of extended-rows that has a $D$ at Column $i$ and has a $Q$ at Column $j$; 
	\item $r_{PQ}$: the number of extended-rows that has a $P$ at Column $i$ and has a $Q$ at Column $j$; 
	\item $r_{QP}$: the number of extended-rows that has a $Q$ at Column $i$ and has a $P$ at Column $j$.	
\end{itemize}
%%%%%%%%%%%%%%%%5
\begin{figure}[h]
\centering
		\begin{tabular}{|l|l|l|}
		\hline
		Lost & To be accessed & Not to be accessed \\
		\hline
		$D$ & $D$, $P$ & $Q$ \\
		\hline
		$P$ & $D$ & $Q$ \\
		\hline
		$Q$ & $D$ & $P$ \\
		\hline
		\end{tabular}
		\caption{Reconstruction rule for an RDP array}
		\label{fig:rdp-reconstruction}
\end{figure}
%%%%%%%%%%%%%
According to the reconstruction rule of RDP arrays (Fig.~\ref{fig:rdp-reconstruction}), 
these extended-rows (that define $r_{DQ}$, $r_{PQ}$, and $r_{QP}$ as above) are \emph{precisely} 
the extended-rows of $\G$ on which the recovery of the column-entry in the $i$th column does not
require access to the column-entry in the $j$th column. 
Therefore, the number of column-entries to be read in column $j$ during the reconstruction of column $i$ is precisely
\[
k(k-1) - r_{DQ} - r_{PQ} - r_{QP}. 
\]
Hence, if $r_{DQ}$, $r_{PQ}$, and $r_{QP}$ are all 
constants for every pair $(i, j)$ then the reconstruction workload is uniformly distributed to all surviving columns. 
As the extended-rows of $\G$ correspond to all possible arrangements of $P$-, $Q$-, and $D$-columns in an RDP
array of size $k$, we have
\[
r_{DQ} = k - 2, \ r_{PQ} = 1, \ r_{QP} = 1, 
\]
for every pair of columns $i$ and $j$ of $\G$. Therefore, $\G$
satisfies (C3). 

The same modification also turns an EVENODD array code or 
an RS code into a balanced $2$-parity group. 
In fact, this method works for every horizontal array code, as long
as they have separate parity columns ($P$- and $Q$-columns) and have 
reconstruction rules that can be clearly stated in tables similar to the one in 
Fig.~\ref{fig:rdp-reconstruction}. 
\end{example}
\vskip 10pt 

Note that a simple
cyclic rotation does \emph{not} turn a horizontal array code into a balanced
$2$-parity group. For instance, consider an array obtained 
by juxtaposing vertically all cyclic rotations of an RDP array
with $p = 5$ as in Fig~\ref{fig:rdp-rotate}. Suppose the first 
column is lost. For reconstruction, according to the rule illustrated in Fig.~\ref{fig:rdp-reconstruction}, one needs to access \emph{five} column-entries on the second column and only \emph{four} column-entries on the last column. Hence, the reconstruction workload is not distributed uniformly among the surviving columns. 
%%%%%%%%%%%%%%%%%%5

\vskip 10pt 
\begin{figure}[h]
\centering
\scalebox{1}
{
\begin{pspicture}(0,-1.650625)(8.077188,1.635)
\definecolor{color1553b}{rgb}{0.8,0.8,0.8}
\usefont{T1}{ptm}{m}{n}
\rput(2.5194063,0.795){$D$}
\usefont{T1}{ptm}{m}{n}
\rput(3.5194063,0.795){$D$}
\usefont{T1}{ptm}{m}{n}
\rput(4.5394063,0.795){$D$}
\usefont{T1}{ptm}{m}{n}
\rput(5.5394063,0.795){$D$}
\usefont{T1}{ptm}{m}{n}
\rput(6.5194063,0.795){$P$}
\usefont{T1}{ptm}{m}{n}
\rput(7.5194063,0.795){$Q$}
\psframe[linewidth=0.04,dimen=outer,fillstyle=solid,fillcolor=color1553b](3.0971875,1.635)(2.0571876,1.165)
\psframe[linewidth=0.04,dimen=outer,fillstyle=solid,fillcolor=color1553b](4.0971875,1.635)(3.0571876,1.165)
\psframe[linewidth=0.04,dimen=outer,fillstyle=solid,fillcolor=color1553b](5.0971875,1.635)(4.0571876,1.165)
\psframe[linewidth=0.04,dimen=outer,fillstyle=solid,fillcolor=color1553b](6.0971875,1.635)(5.0571876,1.165)
\psframe[linewidth=0.04,dimen=outer,fillstyle=solid,fillcolor=color1553b](7.0971875,1.635)(6.0571876,1.165)
\psframe[linewidth=0.04,dimen=outer](3.0771875,1.035)(2.0371876,0.565)
\psframe[linewidth=0.04,dimen=outer](4.0771875,1.035)(3.0371876,0.565)
\psframe[linewidth=0.04,dimen=outer](5.0771875,1.035)(4.0371876,0.565)
\psframe[linewidth=0.04,dimen=outer](6.0771875,1.035)(5.0371876,0.565)
\psframe[linewidth=0.04,dimen=outer](7.0771875,1.035)(6.0371876,0.565)
\psframe[linewidth=0.04,dimen=outer,fillstyle=solid,fillcolor=color1553b](8.077188,1.635)(7.0371876,1.165)
\psframe[linewidth=0.04,dimen=outer](8.077188,1.035)(7.0371876,0.565)
\usefont{T1}{ptm}{m}{n}
\rput(2.55125,1.385){Col. 0}
\usefont{T1}{ptm}{m}{n}
\rput(3.56,1.385){Col. 1}
\usefont{T1}{ptm}{m}{n}
\rput(4.5490627,1.385){Col. 2}
\usefont{T1}{ptm}{m}{n}
\rput(5.5340624,1.385){Col. 3}
\usefont{T1}{ptm}{m}{n}
\rput(6.5515623,1.385){Col. 4}
\usefont{T1}{ptm}{m}{n}
\rput(7.540313,1.385){Col. 5}
\usefont{T1}{ptm}{m}{n}
\rput(0.90234375,0.77){$DDDDPQ$}
\usefont{T1}{ptm}{m}{n}
\rput(0.90234375,0.33){$QDDDDP$}
\usefont{T1}{ptm}{m}{n}
\rput(0.90234375,-0.11){$PQDDDD$}
\usefont{T1}{ptm}{m}{n}
\rput(0.90234375,-0.55){$DPQDDD$}
\usefont{T1}{ptm}{m}{n}
\rput(0.90234375,-0.99){$DDPQDD$}
\usefont{T1}{ptm}{m}{n}
\rput(0.90234375,-1.43){$DDDPQD$}
\usefont{T1}{ptm}{m}{n}
\rput(6.5194063,0.355){$D$}
\usefont{T1}{ptm}{m}{n}
\rput(3.5194063,0.355){$D$}
\usefont{T1}{ptm}{m}{n}
\rput(4.5394063,0.355){$D$}
\usefont{T1}{ptm}{m}{n}
\rput(5.5394063,0.355){$D$}
\usefont{T1}{ptm}{m}{n}
\rput(2.5194063,0.355){$Q$}
\usefont{T1}{ptm}{m}{n}
\rput(7.4994063,0.375){$P$}
\psframe[linewidth=0.04,dimen=outer](3.0771875,0.595)(2.0371876,0.125)
\psframe[linewidth=0.04,dimen=outer](4.0771875,0.595)(3.0371876,0.125)
\psframe[linewidth=0.04,dimen=outer](5.0771875,0.595)(4.0371876,0.125)
\psframe[linewidth=0.04,dimen=outer](6.0771875,0.595)(5.0371876,0.125)
\psframe[linewidth=0.04,dimen=outer](7.0771875,0.595)(6.0371876,0.125)
\psframe[linewidth=0.04,dimen=outer](8.077188,0.595)(7.0371876,0.125)
\usefont{T1}{ptm}{m}{n}
\rput(5.5394063,-0.085){$D$}
\usefont{T1}{ptm}{m}{n}
\rput(7.4994063,-0.085){$D$}
\usefont{T1}{ptm}{m}{n}
\rput(4.5394063,-0.085){$D$}
\usefont{T1}{ptm}{m}{n}
\rput(6.5194063,-0.105){$D$}
\usefont{T1}{ptm}{m}{n}
\rput(2.4994063,-0.085){$P$}
\usefont{T1}{ptm}{m}{n}
\rput(3.5194066,-0.085){$Q$}
\psframe[linewidth=0.04,dimen=outer](3.0771875,0.155)(2.0371876,-0.315)
\psframe[linewidth=0.04,dimen=outer](4.0771875,0.155)(3.0371876,-0.315)
\psframe[linewidth=0.04,dimen=outer](5.0771875,0.155)(4.0371876,-0.315)
\psframe[linewidth=0.04,dimen=outer](6.0771875,0.155)(5.0371876,-0.315)
\psframe[linewidth=0.04,dimen=outer](7.0771875,0.155)(6.0371876,-0.315)
\psframe[linewidth=0.04,dimen=outer](8.077188,0.155)(7.0371876,-0.315)
\usefont{T1}{ptm}{m}{n}
\rput(2.5194063,-0.525){$D$}
\usefont{T1}{ptm}{m}{n}
\rput(5.5394063,-0.545){$D$}
\usefont{T1}{ptm}{m}{n}
\rput(6.5394063,-0.545){$D$}
\usefont{T1}{ptm}{m}{n}
\rput(7.4994063,-0.525){$D$}
\psframe[linewidth=0.04,dimen=outer](3.0771875,-0.285)(2.0371876,-0.755)
\psframe[linewidth=0.04,dimen=outer](4.0771875,-0.285)(3.0371876,-0.755)
\psframe[linewidth=0.04,dimen=outer](5.0771875,-0.285)(4.0371876,-0.755)
\psframe[linewidth=0.04,dimen=outer](6.0771875,-0.285)(5.0371876,-0.755)
\psframe[linewidth=0.04,dimen=outer](7.0771875,-0.285)(6.0371876,-0.755)
\psframe[linewidth=0.04,dimen=outer](8.077188,-0.285)(7.0371876,-0.755)
\usefont{T1}{ptm}{m}{n}
\rput(2.5194063,-0.965){$D$}
\usefont{T1}{ptm}{m}{n}
\rput(3.5194063,-0.965){$D$}
\usefont{T1}{ptm}{m}{n}
\rput(6.5194063,-0.965){$D$}
\usefont{T1}{ptm}{m}{n}
\rput(7.4994063,-0.965){$D$}
\psframe[linewidth=0.04,dimen=outer](3.0771875,-0.725)(2.0371876,-1.195)
\psframe[linewidth=0.04,dimen=outer](4.0771875,-0.725)(3.0371876,-1.195)
\psframe[linewidth=0.04,dimen=outer](5.0771875,-0.725)(4.0371876,-1.195)
\psframe[linewidth=0.04,dimen=outer](6.0771875,-0.725)(5.0371876,-1.195)
\psframe[linewidth=0.04,dimen=outer](7.0771875,-0.725)(6.0371876,-1.195)
\psframe[linewidth=0.04,dimen=outer](8.077188,-0.725)(7.0371876,-1.195)
\usefont{T1}{ptm}{m}{n}
\rput(2.5194063,-1.405){$D$}
\usefont{T1}{ptm}{m}{n}
\rput(3.5194063,-1.405){$D$}
\usefont{T1}{ptm}{m}{n}
\rput(4.5394063,-1.425){$D$}
\usefont{T1}{ptm}{m}{n}
\rput(7.5194063,-1.425){$D$}
\psframe[linewidth=0.04,dimen=outer](3.0771875,-1.165)(2.0371876,-1.635)
\psframe[linewidth=0.04,dimen=outer](4.0771875,-1.165)(3.0371876,-1.635)
\psframe[linewidth=0.04,dimen=outer](5.0771875,-1.165)(4.0371876,-1.635)
\psframe[linewidth=0.04,dimen=outer](6.0771875,-1.165)(5.0371876,-1.635)
\psframe[linewidth=0.04,dimen=outer](7.0771875,-1.165)(6.0371876,-1.635)
\psframe[linewidth=0.04,dimen=outer](8.077188,-1.165)(7.0371876,-1.635)
\usefont{T1}{ptm}{m}{n}
\rput(3.5194063,-0.525){$P$}
\usefont{T1}{ptm}{m}{n}
\rput(4.5394063,-0.525){$Q$}
\usefont{T1}{ptm}{m}{n}
\rput(4.5394063,-0.965){$P$}
\usefont{T1}{ptm}{m}{n}
\rput(5.5394063,-0.965){$Q$}
\usefont{T1}{ptm}{m}{n}
\rput(5.5394063,-1.425){$P$}
\usefont{T1}{ptm}{m}{n}
\rput(6.5394063,-1.405){$Q$}
\end{pspicture} 
}
\caption{Rotated RDP array does not form a balanced $2$-parity group ($p=5$)}
\label{fig:rdp-rotate}
\end{figure}

\begin{definition}%horizontal balanced $2$-parity group
\label{def:hpg}
The balanced $2$-parity group obtained from an RDP array code 
as in Example~\ref{ex:bpg-1} is called an (balanced) \emph{RDP $2$-parity group}. 
An \emph{EVENODD $2$-parity group} and an  \emph{RS $2$-parity group} are
defined in the same way.  
\end{definition}
\vskip 10pt 

\begin{lemma}
\label{lem:hpg-workload}
Suppose $\G$ is a balanced RDP/EVENODD/RS $2$-parity group of size $k$. 
Then to reconstruct a missing column of $\G$, one needs to read 
a portion $\frac{k-2}{k-1}$ of the total content of each other column.
In fact, this also holds for every horizontal code that has the same 
reconstruction rule as the RDP code.  
\end{lemma}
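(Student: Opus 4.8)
The plan is to reduce the statement to the column-entry count already carried out in Example~\ref{ex:bpg-1} and then pass from column-entries to actual content by a uniform-size argument. First I would record the coarse structure of $\G$: it is the vertical juxtaposition of the $k(k-1)$ extended-rows $\A_0,\ldots,\A_{k(k-1)-1}$, each of which is a copy of the base horizontal code (RDP, EVENODD, or RS) in one particular arrangement of its $P$- and $Q$-columns. Consequently every column of $\G$ contains exactly $k(k-1)$ column-entries, and---this is the key observation---all of these column-entries have the \emph{same} size, since each is one column of a single copy of the base code. Because the column-units all have equal size, the fraction of the content of a surviving column that must be read equals the fraction of its column-entries that must be read, so it suffices to count column-entries.

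Next I would invoke the reconstruction rule in Fig.~\ref{fig:rdp-reconstruction}, which (for a single lost column) reads each surviving column-entry either in full or not at all. Fixing two distinct columns $i$ and $j$ of $\G$ and reconstructing column $i$, the extended-rows on which column $j$ is \emph{not} accessed are exactly those enumerated by $r_{DQ}$, $r_{PQ}$, and $r_{QP}$ in Example~\ref{ex:bpg-1}. Since the extended-rows range over \emph{all} placements of the two parity columns, those counts were shown to be $r_{DQ}=k-2$, $r_{PQ}=1$, and $r_{QP}=1$, independently of the pair $(i,j)$. Hence the number of column-entries of column $j$ that are read equals
\[
k(k-1)-r_{DQ}-r_{PQ}-r_{QP}=k(k-1)-(k-2)-1-1=k(k-2).
\]
Dividing by the total number $k(k-1)$ of column-entries in column $j$ gives the portion
\[
\frac{k(k-2)}{k(k-1)}=\frac{k-2}{k-1},
\]
which is independent of both $i$ and $j$, establishing the claim for $\G$. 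The last sentence of the lemma then follows because the only properties of the base code used are that every extended-row is a copy of the same code (so the column-units have equal size) and that its single-column reconstruction rule reads each surviving column either fully or not at all; any horizontal code sharing RDP's reconstruction table (Fig.~\ref{fig:rdp-reconstruction}) yields the same counts $r_{DQ}$, $r_{PQ}$, $r_{QP}$ and hence the same fraction.

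I do not expect a serious obstacle here: the computation is elementary once Example~\ref{ex:bpg-1} is in hand. The one point that genuinely needs care is the passage from counting column-entries to measuring content, which is valid precisely because all column-units have identical depth; were the base code to have columns of differing depths this step would break down, so I would state the equal-size remark explicitly rather than leave it implicit.
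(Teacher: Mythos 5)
Your proposal is correct and follows essentially the same route as the paper's own proof: both reduce the claim to the column-entry counts $r_{DQ}=k-2$, $r_{PQ}=1$, $r_{QP}=1$ established in Example~\ref{ex:bpg-1} and compute $\frac{k(k-1)-(k-2)-1-1}{k(k-1)}=\frac{k-2}{k-1}$. Your explicit remark that all column-units have equal depth (so counting column-entries is equivalent to measuring content) is a point the paper leaves implicit, but it is the same argument.
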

\vskip 10pt
\begin{proof} 
Appendix~\ref{appendix-A}.
\end{proof} 
\vskip 10pt 

\subsection{Design of Declustered-Parity Layouts via $3$-Designs}
\label{subsec:dp-via-3-designs}

Recall that the size $k$ of a $2$-parity group $\G$ is its number of columns.
Each column of $\G$ corresponds to a column-unit in a physical disk, which 
is a column of data/parity units.

\begin{figure}[h]
\centering
\scalebox{1} % Change this value to rescale the drawing.
{
\begin{pspicture}(0,-1.6792186)(6.935,1.7192186)
\definecolor{color1984b}{rgb}{0.8,0.8,0.8}
\psframe[linewidth=0.04,dimen=outer](1.2367188,0.14078125)(0.21671873,-0.698308)
\psbezier[linewidth=0.02](2.714737,-1.0893269)(2.7970016,-0.7300453)(5.326643,-1.0893269)(5.1004148,-0.7548233)
\psbezier[linewidth=0.02](2.7353032,-1.0893269)(2.6119058,-0.71765625)(0.0,-1.0893269)(0.28792664,-0.7672124)
\usefont{T1}{ptm}{m}{n}
\rput(2.6642187,-1.3292187){$\G$}
\psline[linewidth=0.018cm,arrowsize=0.05291667cm 2.0,arrowlength=1.4,arrowinset=0.4]{->}(2.58,1.3807812)(0.64,0.14078125)
\psline[linewidth=0.018cm,arrowsize=0.05291667cm 2.0,arrowlength=1.4,arrowinset=0.4]{->}(2.66,1.3607812)(1.7167187,0.141692)
\psline[linewidth=0.018cm,arrowsize=0.05291667cm 2.0,arrowlength=1.4,arrowinset=0.4]{->}(2.82,1.3607812)(3.7167187,0.101692)
\psline[linewidth=0.018cm,arrowsize=0.05291667cm 2.0,arrowlength=1.4,arrowinset=0.4]{->}(2.94,1.3607812)(4.6767187,0.141692)
\usefont{T1}{ptm}{m}{n}
\rput(2.7142189,1.5307811){$k$ columns of $\G$}
\psline[linewidth=0.018cm,linestyle=dashed,dash=0.16cm 0.16cm,arrowsize=0.05291667cm 2.0,arrowlength=1.4,arrowinset=0.4]{->}(5.1767187,0.101692)(5.616719,1.141692)
\psline[linewidth=0.018cm,linestyle=dashed,dash=0.16cm 0.16cm,arrowsize=0.05291667cm 2.0,arrowlength=1.4,arrowinset=0.4]{->}(5.1767187,-0.678308)(5.616719,-1.658308)
\psframe[linewidth=0.04,dimen=outer,fillstyle=solid,fillcolor=color1984b](6.92,1.1607811)(5.66,-1.6792186)
\usefont{T1}{ptm}{m}{n}
\rput(6.240781,0.87078124){data}
\psline[linewidth=0.03cm](5.68,0.6407812)(6.86,0.6407812)
\psline[linewidth=0.03cm](5.68,0.22078125)(6.92,0.22078125)
\psline[linewidth=0.03cm](5.68,-0.21921875)(6.88,-0.23921874)
\psline[linewidth=0.03cm](5.68,-0.7392188)(6.88,-0.75921875)
\usefont{T1}{ptm}{m}{n}
\rput(6.240781,0.45078123){data}
\usefont{T1}{ptm}{m}{n}
\rput(6.268594,-0.00921875){parity}
\usefont{T1}{ptm}{m}{n}
\rput(6.260781,-0.48921877){data}
\psframe[linewidth=0.04,dimen=outer](2.2167187,0.14078125)(1.1967187,-0.698308)
\psframe[linewidth=0.04,dimen=outer](3.1967187,0.14078125)(2.1767187,-0.698308)
\psframe[linewidth=0.04,dimen=outer](4.1767187,0.14078125)(3.1567187,-0.698308)
\psframe[linewidth=0.04,dimen=outer,fillstyle=solid,fillcolor=color1984b](5.1567187,0.14078136)(4.1367188,-0.698308)
\psline[linewidth=0.08cm,linestyle=dotted,dotsep=0.16cm](2.43,0.6992186)(2.99,0.6992186)
\psline[linewidth=0.08cm,linestyle=dotted,dotsep=0.16cm](2.43,-0.2807814)(2.99,-0.2807814)
\psline[linewidth=0.08cm,linestyle=dotted,dotsep=0.16cm](6.309284,-0.8807823)(6.310716,-1.4407805)
\end{pspicture} 
}
\caption{Simplified layout of a $2$-parity group}
\label{fig:parity-group-simplified}
\end{figure}

The following algorithm extends Algorithm~1 to construct declustered-parity layout for 
two-failure tolerant codes. Compared to Algorithm~1, in the resulting array that 
Algorithm~2 produces, the number of parity units in every column is already balanced
(see Theorem~\ref{thm:main1}). 

\vskip 10pt 
\nin {\bf Algorithm 2}
\begin{itemize}
  \item {\bf Input:} $n$ is the number of physical disks in the array and $k$ is the parity group size.
	\item {\bf Step 1:} Choose a balanced $2$-parity group $\G$ of size $k$ ($\G$ has $k$ columns).
	\item {\bf Step 2:} Choose a $3$-$(n,k,\lam)$ design $\D = (\X,\B)$ for some $\lam$.
	\item {\bf Step 3:} For each block $B_i = \{b_{i,0},\ldots,b_{i,k-1}\} \in \B$, $0 \leq i < |\B|$, create a balanced $2$-parity group $\G_i$ as 
	follows. Firstly, $\G_i$ must have the same data-parity pattern and the same reconstruction rule as $\G$. 
	Secondly, the $k$ columns of $\G_i$ are located on disks with labels $b_{i,0}, \ldots, b_{i,k-1}$. 
	\item {\bf Output:} The $n$-disk array with $|\B|$ parity groups and their layouts according to Step~3.  
\end{itemize}

Note that even though $\G_i$, $0 \leq i < |\B|$, all have the same data-parity pattern of $\G$, on the physical disks, they store
independent sets of data/parity units. The steps in Algorithm~2 are illustrated in the following example. 

\vskip 10pt 
\begin{example}
\label{ex:algo2}
Suppose $\G$ is a balanced $2$-parity group of size four. 
For instance, $\G$ can be obtained from a $2 \times 4$ RDP array ($p = 3$)
using the method described in Example~\ref{ex:bpg-1}. Then the 
simplified layout of $\G$ is as follows (Fig.~\ref{fig:rdp-4}). Each column of $\G$ actually corresponds to a column of 
$24 = 2\times (4 \times 3)$ parity/data units on a physical disk. 
 
\begin{figure}[h]
\centering
\scalebox{1}
{
\begin{pspicture}(0,-0.61)(3.96,0.65)
\usefont{T1}{ptm}{m}{n}
\rput(1.9628124,0.4615625){$\G$}
\psframe[linewidth=0.04,dimen=outer](1.02,0.19156249)(0.0,-0.61)
\psframe[linewidth=0.04,dimen=outer](2.0,0.19156249)(0.98,-0.61)
\psframe[linewidth=0.04,dimen=outer](2.98,0.19156249)(1.96,-0.61)
\psframe[linewidth=0.04,dimen=outer](3.96,0.19156249)(2.94,-0.61)
\end{pspicture} 
}
\caption{A balanced $2$-parity group of size four}
\label{fig:rdp-4}
\end{figure}

Suppose we have $n = 8$ physical disks. Consider the following $3$-$(8,4,1)$ design
$\D = (\X, \B)$ where
\[
\X = \{0, 1, 2, 3, 4, 5, 6, 7\},
\]
and 
\[
\begin{split}
\B = \Big\{
&\{0, 1, 2, 3\}, \{0, 1, 4, 5\}, \{0,1,6,7\}, \{0,2,4,6\},\\
&\{0,2,5,7\}, \{0,3,4,7\}, \{0,3,5,6\}, \{4,5,6,7\},\\
&\{2,3,6,7\}, \{2,3,4,5\}, \{1,3,5,7\}, \{1,3,4,6\},\\
&\{1,2,5,6\}, \{1,2,4,7\} 
\Big\}.
\end{split}
\]
The resulting array code $\C$ is depicted in Fig.~\ref{fig:resulting-code}. 
There are $14$ $2$-parity groups in $\C$, namely $\G_i$, $0 \leq i < 14$. 
%The column-entries of $\C$ that are labeled with $i$ belong to the same 
%$2$-parity group $\G_i$, $0 \leq i < 14$. 
The $2$-parity group $\G_i$ has its columns, labeled by $i$, spread across the disks indexed  
by elements from the block $B_i \in \B$, $0 \leq i < 14$.
For example, as $B_{13} = \{1,2,4,7\}$, the columns of $\G_{13}$, labeled
by $13$, are located on Disk $1$, Disk $2$, Disk $4$, and Disk $7$.   
As each $\G_i$ is a $24 \times 4$ array, $\C$ is actually a $168\times 8$ array 
($168 = 7 \times 24$). 

\begin{figure}[h]
\centering
\scalebox{1}
{
\begin{pspicture}(0,-3.3424733)(8.07,3.3224733)
\definecolor{color1553b}{rgb}{0.8,0.8,0.8}
\psframe[linewidth=0.04,dimen=outer](8.04,2.3224733)(0.02,-3.3224733)
\psline[linewidth=0.04cm](1.0,2.29572)(1.0,-3.29572)
\psline[linewidth=0.04cm](2.0,2.2689667)(2.0,-3.3224733)
\psline[linewidth=0.04cm](3.02,2.2689667)(3.02,-3.3224733)
\psline[linewidth=0.04cm](4.0,2.2689667)(4.0,-3.3224733)
\psline[linewidth=0.04cm](5.0,2.2689667)(5.0,-3.3224733)
\psline[linewidth=0.04cm](6.02,2.29572)(6.02,-3.29572)
\psline[linewidth=0.04cm](7.0,2.2689667)(7.0,-3.3224733)
\psline[linewidth=0.02cm](0.04,1.5224733)(8.04,1.5224733)
\psline[linewidth=0.02cm](0.06,0.7224733)(8.06,0.7224733)
\psline[linewidth=0.02cm](0.04,-0.09752667)(8.04,-0.09752667)
\psline[linewidth=0.02cm](0.04,-0.87752664)(8.04,-0.87752664)
\psline[linewidth=0.02cm](0.04,-1.6775267)(8.04,-1.6775267)
\psline[linewidth=0.02cm](0.04,-2.4775267)(8.04,-2.4775267)
\psframe[linewidth=0.03,dimen=outer,fillstyle=solid,fillcolor=color1553b](8.02,3.3224733)(0.0,2.4624734)
\psline[linewidth=0.04cm](1.0,3.3024733)(1.0,2.5024734)
\psline[linewidth=0.04cm](2.0,3.3024733)(2.0,2.5024734)
\psline[linewidth=0.04cm](3.0,3.3024733)(3.0,2.5024734)
\psline[linewidth=0.04cm](4.0,3.3024733)(4.0,2.5024734)
\psline[linewidth=0.04cm](5.0,3.3024733)(5.0,2.5024734)
\psline[linewidth=0.04cm](6.0,3.3024733)(6.0,2.5024734)
\psline[linewidth=0.04cm](7.0,3.3024733)(7.0,2.5024734)
\usefont{T1}{ptm}{m}{n}
\rput(0.49093747,2.8924732){Disk 0}
\usefont{T1}{ptm}{m}{n}
\rput(1.4996874,2.8924732){Disk 1}
\usefont{T1}{ptm}{m}{n}
\rput(2.50875,2.8924732){Disk 2}
\usefont{T1}{ptm}{m}{n}
\rput(3.51375,2.8924732){Disk 3}
\usefont{T1}{ptm}{m}{n}
\rput(4.51125,2.8924732){Disk 4}
\usefont{T1}{ptm}{m}{n}
\rput(5.5,2.8924732){Disk 5}
\usefont{T1}{ptm}{m}{n}
\rput(6.51,2.8924732){Disk 6}
\usefont{T1}{ptm}{m}{n}
\rput(7.507812,2.8924732){Disk 7}
\usefont{T1}{ptm}{m}{n}
\rput(0.49281248,1.8924733){$0$}
\usefont{T1}{ptm}{m}{n}
\rput(1.4728124,1.8724734){$0$}
\usefont{T1}{ptm}{m}{n}
\rput(2.4728124,1.8724734){$0$}
\usefont{T1}{ptm}{m}{n}
\rput(3.4928124,1.8724734){$0$}
\usefont{T1}{ptm}{m}{n}
\rput(4.4728127,1.8724734){$1$}
\usefont{T1}{ptm}{m}{n}
\rput(5.4728127,1.8724734){$1$}
\usefont{T1}{ptm}{m}{n}
\rput(6.4928126,1.8524734){$2$}
\usefont{T1}{ptm}{m}{n}
\rput(7.4928126,1.8524734){$2$}
\usefont{T1}{ptm}{m}{n}
\rput(0.49281248,1.1324733){$1$}
\usefont{T1}{ptm}{m}{n}
\rput(1.4728124,1.1124734){$1$}
\usefont{T1}{ptm}{m}{n}
\rput(2.4728124,1.1124734){$3$}
\usefont{T1}{ptm}{m}{n}
\rput(3.4928124,1.1124734){$5$}
\usefont{T1}{ptm}{m}{n}
\rput(4.4728127,1.1124734){$3$}
\usefont{T1}{ptm}{m}{n}
\rput(5.4728127,0.29247335){$6$}
\usefont{T1}{ptm}{m}{n}
\rput(6.4928126,1.0924734){$3$}
\usefont{T1}{ptm}{m}{n}
\rput(7.4928126,1.0924734){$4$}
\usefont{T1}{ptm}{m}{n}
\rput(0.49281248,0.31247333){$2$}
\usefont{T1}{ptm}{m}{n}
\rput(1.4728124,0.29247335){$2$}
\usefont{T1}{ptm}{m}{n}
\rput(2.4728124,0.29247335){$4$}
\usefont{T1}{ptm}{m}{n}
\rput(3.4928124,0.29247335){$6$}
\usefont{T1}{ptm}{m}{n}
\rput(4.4728127,0.27247334){$5$}
\usefont{T1}{ptm}{m}{n}
\rput(5.4728127,-0.5275267){$7$}
\usefont{T1}{ptm}{m}{n}
\rput(6.4928126,0.27247334){$6$}
\usefont{T1}{ptm}{m}{n}
\rput(7.4928126,0.27247334){$5$}
\usefont{T1}{ptm}{m}{n}
\rput(0.49281248,-0.48752666){$3$}
\usefont{T1}{ptm}{m}{n}
\rput(2.4728124,-0.5075267){$8$}
\usefont{T1}{ptm}{m}{n}
\rput(3.4928124,-0.5075267){$8$}
\usefont{T1}{ptm}{m}{n}
\rput(4.4728127,-0.5075267){$7$}
\usefont{T1}{ptm}{m}{n}
\rput(5.4728127,-1.3075267){$9$}
\usefont{T1}{ptm}{m}{n}
\rput(6.4928126,-0.5275267){$7$}
\usefont{T1}{ptm}{m}{n}
\rput(7.4928126,-0.5275267){$7$}
\usefont{T1}{ptm}{m}{n}
\rput(0.51281244,-1.3075267){$4$}
\usefont{T1}{ptm}{m}{n}
\rput(1.4828124,-0.5075267){$10$}
\usefont{T1}{ptm}{m}{n}
\rput(2.5128124,-1.3075267){$9$}
\usefont{T1}{ptm}{m}{n}
\rput(3.5128124,-1.3075267){$9$}
\usefont{T1}{ptm}{m}{n}
\rput(4.4928126,-1.3075267){$9$}
\usefont{T1}{ptm}{m}{n}
\rput(5.4628124,-2.1075268){$10$}
\usefont{T1}{ptm}{m}{n}
\rput(6.4528127,-1.3075267){$8$}
\usefont{T1}{ptm}{m}{n}
\rput(7.5128126,-1.3075267){$8$}
\usefont{T1}{ptm}{m}{n}
\rput(0.51281244,-2.1075268){$5$}
\usefont{T1}{ptm}{m}{n}
\rput(1.4828124,-1.3075267){$11$}
\usefont{T1}{ptm}{m}{n}
\rput(2.4828124,-2.1075268){$12$}
\usefont{T1}{ptm}{m}{n}
\rput(3.4828124,-2.1075268){$10$}
\usefont{T1}{ptm}{m}{n}
\rput(4.4628124,-2.1075268){$11$}
\usefont{T1}{ptm}{m}{n}
\rput(5.4628124,-2.9275267){$12$}
\usefont{T1}{ptm}{m}{n}
\rput(6.4828124,-2.1075268){$11$}
\usefont{T1}{ptm}{m}{n}
\rput(7.4828124,-2.1075268){$10$}
\usefont{T1}{ptm}{m}{n}
\rput(0.51281244,-2.8875268){$6$}
\usefont{T1}{ptm}{m}{n}
\rput(1.4828124,-2.1075268){$12$}
\usefont{T1}{ptm}{m}{n}
\rput(2.4828124,-2.9075267){$13$}
\usefont{T1}{ptm}{m}{n}
\rput(3.4828124,-2.9075267){$11$}
\usefont{T1}{ptm}{m}{n}
\rput(4.4628124,-2.9075267){$13$}
\usefont{T1}{ptm}{m}{n}
\rput(6.4828124,-2.9275267){$12$}
\usefont{T1}{ptm}{m}{n}
\rput(7.4828124,-2.9275267){$13$}
\usefont{T1}{ptm}{m}{n}
\rput(5.4728127,1.1124734){$4$}
\usefont{T1}{ptm}{m}{n}
\rput(1.4628124,-2.9075267){$13$}
\end{pspicture} 
}
\caption{The resulting array code $\C$}
\label{fig:resulting-code}
\end{figure}
\end{example}

 \begin{theorem}
\label{thm:main1}
Algorithm~2 produces an array code that satisfies the following properties
\begin{itemize}
	\item[(P1)] it can tolerate at most two simultaneous disk failures; 
	\item[(P2)] when one or two disks fail, the reconstruction workload is 
	evenly distributed to all surviving disks; 
	\item[(P3)] every column of $\C$ has the same number of parity units and data units. 
\end{itemize}
\end{theorem}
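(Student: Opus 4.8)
The plan is to verify the three properties essentially independently, dispatching (P1) and (P3) quickly and reserving the bulk of the work for (P2). For (P1) the key observation is that the parity groups $\G_i$ store mutually independent data, and each $\G_i$ is by Definition~\ref{def:pg} a $2$-failure tolerant code. If at most two disks fail, then for each block $B_i$ at most two of its points lie among the failed disks, so at most two columns of $\G_i$ are erased; since $\G_i$ tolerates two erasures it can be reconstructed on its own, and hence so can all of $\C$.

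For (P3) I would invoke the standard fact that in a $t$-design with $t\ge 2$ every point lies in the same number $r$ of blocks (the replication number). Thus each disk carries exactly $r$ columns, one from each parity group whose block contains that point. By condition (C4) every column of a balanced $2$-parity group has the same number of parity entries, namely $2m/k$ (and hence $(k-2)m/k$ data entries), where $m$ is the depth. Consequently each disk stores $r\cdot 2m/k$ parity units and $r\cdot(k-2)m/k$ data units, a count independent of the disk.

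The substantive part is (P2), and I would handle the one- and two-failure cases separately, counting for an arbitrary surviving disk $c$ the number of units that must be read. The crucial inputs are: recovering a single column of a balanced $2$-parity group reads a fraction $\tfrac{k-2}{k-1}$ of each other column (Lemma~\ref{lem:hpg-workload}), whereas recovering two columns forces a full read of every surviving column of that group. Writing $\lam_2$ for the number of blocks through any fixed pair of points and $\lam_3=\lam$ for the number through any fixed triple (both constants, since $\D$ is a $3$-design and hence also a $2$-design), I would organise the count by the parity groups meeting the failed disk(s). For a single failed disk $a$, only blocks with $a,c\in B_i$ trigger a read on $c$, each contributing $\tfrac{k-2}{k-1}m$ units, and there are $\lam_2$ such blocks, giving total access $\tfrac{k-2}{k-1}m\,\lam_2$, independent of $c$. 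For two failed disks $a,b$, I would split the blocks $B_i\ni c$ into those containing both $a$ and $b$ (a full read of $m$ units, with exactly $\lam$ such blocks) and those containing exactly one of $a,b$ (a $\tfrac{k-2}{k-1}m$ read, with $2(\lam_2-\lam)$ such blocks by inclusion–exclusion), while blocks meeting neither require no read. The total
\[
m\lam + \tfrac{k-2}{k-1}m\cdot 2(\lam_2-\lam)
\]
again depends only on the design parameters, not on $c$, which establishes uniformity.

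The main obstacle is the two-failure bookkeeping: one must be certain that the partial-read fraction for a single erasure and the full-read requirement for a double erasure are exactly the right local quantities, and that the inclusion–exclusion count $2(\lam_2-\lam)$ of ``exactly one failed disk'' blocks is correct. This is precisely where the $3$-design hypothesis is indispensable, since the constancy of $\lam_3$ over all triples is what forces the double-failure workload to coincide for every surviving disk; a mere $2$-design would control the single-failure case but leave the double-failure count dependent on $c$.
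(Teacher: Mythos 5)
Your proof follows essentially the same route as the paper's (Appendix~\ref{appendix-B}): (P1) from the two-erasure tolerance of each $\G_i$ plus the fact that each block places the columns of $\G_i$ on distinct disks; (P3) from the constancy of the replication number $\lambda_1$ together with (C4); and (P2) by classifying, for a surviving disk $c$, the blocks through $c$ according to how many failed disks they contain. Your inclusion--exclusion count $2(\lambda_2-\lambda)$ for blocks containing $c$ and exactly one failed disk agrees with the paper's $2\lambda_2^{(1)}$, since $\lambda_2-\lambda = \lambda(n-2)/(k-2) - \lambda = \lambda(n-k)/(k-2) = \lambda_2^{(1)}$; the paper obtains this constant directly from Theorem~\ref{thm:Stinson} (via Corollary~\ref{cr:Stinson2}) rather than by subtraction, but the two are identical.

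The one genuine (though easily repaired) defect is a loss of generality in (P2): you take the per-column read counts to be $\frac{k-2}{k-1}m$ for a single lost column (citing Lemma~\ref{lem:hpg-workload}) and a full read of $m$ units for two lost columns. Both of these are properties of RDP/EVENODD/RS $2$-parity groups, not of an arbitrary balanced $2$-parity group, which is what Algorithm~2 and Theorem~\ref{thm:main1} permit. Condition (C3) of Definition~\ref{def:pg} guarantees only that these counts are \emph{constants} $\tau_1,\tau_2$, the same for every contributing column; it does not fix their values, and in particular does not force a full read in the two-erasure case. The paper's proof therefore works with the abstract constants $\tau_1,\tau_2$ and obtains the disk-independent totals $\lambda_2\tau_1$ (one failure) and $\lambda\tau_2 + 2\lambda_2^{(1)}\tau_1$ (two failures); the specific fractions you quote are only invoked later, in Theorem~\ref{pro:trade-off}, where the parity group is explicitly assumed to be of RDP/EVENODD/RS type. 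Your argument goes through verbatim after replacing your concrete read counts by $\tau_1$ and $\tau_2$, so the structure is sound; only the instantiation is narrower than the theorem requires.
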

\begin{proof}
Appendix~\ref{appendix-B}.  
\end{proof}  
\vskip 10pt 

We now give a high level explanation of how $3$-designs and balanced $2$-parity
groups work well together to produce declustered-parity layouts for two-failure tolerant codes. 

First, let us examine again the application of $2$-designs to one-failure tolerant codes. 
When one disk fails, it is required that all other disks contribute the same amount of data
accesses during the reconstruction process. In other words, we are examining \emph{pairs of 
disks} (one failed, one survived) and want to make sure that all of these pairs have the same
number of related data/parity units (Fig.~\ref{fig:2-disks}). 
(Related units are units that belong to the same parity group). 
On the other hand, in a $2$-design, a similar-looking condition is
applied to \emph{pairs of points}: every pair of points must belong to the same number of blocks. 
That is how the connection between one-failure tolerant codes
and $2$-designs could be established.  
%%%%%%%%%%%%%%%%%%
\begin{figure}[h]
\centering
\scalebox{1}
{
\begin{pspicture}(0,-2.20875)(4.8329687,2.20875)
\psframe[linewidth=0.04,dimen=outer](4.1698437,1.45875)(3.429844,-0.68125)
\psframe[linewidth=0.04,dimen=outer](1.4098436,1.45875)(0.6698436,-0.68125)
\psline[linewidth=0.098000005cm](3.0898438,1.75875)(4.6498437,-0.90125)
\psline[linewidth=0.098000005cm](4.509844,1.73875)(2.929844,-0.88125)
\usefont{T1}{ptm}{m}{n}
\rput(3.7979689,2.02875){Lost Disk}
\usefont{T1}{ptm}{m}{n}
\rput(1.0273438,2.02875){Surviving Disk}
\psbezier[linewidth=0.02](2.4387474,-1.3796875)(2.5008354,-0.79968756)(4.4100413,-1.3796875)(4.2393,-0.8396875)
\psbezier[linewidth=0.02](2.4542694,-1.3796875)(2.3611374,-0.7796875)(0.3898438,-1.3796875)(0.6071518,-0.8596875)
\usefont{T1}{ptm}{m}{n}
\rput(2.4360938,-1.63125){Number of related units between}
\usefont{T1}{ptm}{m}{n}
\rput(2.4129686,-2.03125){$2$ disks must be a constant}
\end{pspicture} 
}
\caption{Requirement for any pair of disks}
\label{fig:2-disks}
\end{figure}

%%%%%%%%%%%%%%%%%%%%%%%%
\begin{figure}
\centering
\scalebox{1} % Change this value to rescale the drawing.
{
\begin{pspicture}(0,-2.16875)(6.5486875,2.16875)
\psframe[linewidth=0.04,dimen=outer](3.9396873,1.41875)(3.1996872,-0.72125)
\psframe[linewidth=0.04,dimen=outer](1.3996873,1.41875)(0.6596873,-0.72125)
\psline[linewidth=0.098000005cm](2.8596873,1.71875)(4.4196873,-0.94125)
\psline[linewidth=0.098000005cm](4.2796874,1.69875)(2.6996872,-0.92125)
\usefont{T1}{ptm}{m}{n}
\rput(3.5703123,1.98875){Lost Disk}
\usefont{T1}{ptm}{m}{n}
\rput(1.0273438,1.98875){Surviving Disk}
\psbezier[linewidth=0.02](3.387341,-1.3796875)(3.4796941,-0.7996875)(6.3195534,-1.3796875)(6.065582,-0.8396875)
\psbezier[linewidth=0.02](3.4104292,-1.3796875)(3.2718995,-0.7796875)(0.3396873,-1.3796875)(0.6629233,-0.8596875)
\usefont{T1}{ptm}{m}{n}
\rput(3.404375,-1.59125){Number of related units between}
\usefont{T1}{ptm}{m}{n}
\rput(3.3920312,-1.99125){$3$ disks must be a constant}
\psframe[linewidth=0.04,dimen=outer](6.019687,1.41875)(5.2796874,-0.72125)
\psline[linewidth=0.098000005cm](4.9396873,1.71875)(6.499687,-0.94125)
\psline[linewidth=0.098000005cm](6.3596873,1.69875)(4.7796874,-0.92125)
\usefont{T1}{ptm}{m}{n}
\rput(5.6303124,1.98875){Lost Disk}
\end{pspicture} 
}
\caption{Requirement for any group of three disks}
\label{fig:3-disks}
\end{figure}

The problem of designing declustered-parity layouts for two-failure tolerant codes 
also has a similar requirement. It is required that when one or two disks fail, all surviving
disks contribute the same number of data accesses during the reconstruction process. 
Suppose two disks fail. We are in fact examining \emph{groups of three disks} (two failed, one survived)
and want to make sure that all of these groups have the same number of ``related`` data/parity units (Fig.~\ref{fig:3-disks}).
(We use a different meaning here for ``related units". See Appendix~\ref{appendix-B} for more details.)
If we consider a $3$-design, the key property is that every \emph{group of three points} must be 
contained in the same number of blocks. 
At first sight, it is not clear how to translate this condition on points/blocks back 
to the aforementioned condition on disks/groups. 
However, one can do so with the help from some results in Design Theory.  
More details can be found in the proof of Theorem~\ref{thm:main1}
in the Appendix~\ref{appendix-B}. Note also that as a $3$-design is also a $2$-design
(see Corollary~\ref{cr:Stinson1}), uniform workload
for reconstruction of one failed disk is automatically guaranteed. 

The balance of the $2$-parity group used in Algorithm~2 is another key condition
to guarantee the balanced reconstruction workload. In the following
example, it is demonstrated that Algorithm~2 applied to an unbalanced $2$-parity
group does \emph{not} produce a code with this property. 

\begin{figure}[ht]
\centering
\scalebox{1} % Change this value to rescale the drawing.
{
\begin{pspicture}(0,-3.3424733)(8.303281,3.3224733)
\definecolor{color1984b}{rgb}{0.8,0.8,0.8}
\psframe[linewidth=0.04,dimen=outer](8.04,2.3224733)(0.02,-3.3224733)
\psline[linewidth=0.04cm](1.0,2.29572)(1.0,-3.29572)
\psline[linewidth=0.04cm](2.0,2.2689667)(2.0,-3.3224733)
\psline[linewidth=0.04cm](3.02,2.2689667)(3.02,-3.3224733)
\psline[linewidth=0.04cm](4.0,2.2689667)(4.0,-3.3224733)
\psline[linewidth=0.04cm](5.0,2.2689667)(5.0,-3.3224733)
\psline[linewidth=0.04cm](6.02,2.29572)(6.02,-3.29572)
\psline[linewidth=0.04cm](7.0,2.2689667)(7.0,-3.3224733)
\psline[linewidth=0.02cm](0.04,1.5224733)(8.04,1.5224733)
\psline[linewidth=0.02cm](0.06,0.7224733)(8.06,0.7224733)
\psline[linewidth=0.02cm](0.04,-0.09752667)(8.04,-0.09752667)
\psline[linewidth=0.02cm](0.04,-0.87752664)(8.04,-0.87752664)
\psline[linewidth=0.02cm](0.04,-1.6775267)(8.04,-1.6775267)
\psline[linewidth=0.02cm](0.04,-2.4775267)(8.04,-2.4775267)
\psframe[linewidth=0.03,dimen=outer,fillstyle=solid,fillcolor=color1984b](8.02,3.3224733)(0.0,2.4624734)
\psline[linewidth=0.04cm](1.0,3.3024733)(1.0,2.5024734)
\psline[linewidth=0.04cm](2.0,3.3024733)(2.0,2.5024734)
\psline[linewidth=0.04cm](3.0,3.3024733)(3.0,2.5024734)
\psline[linewidth=0.04cm](4.0,3.3024733)(4.0,2.5024734)
\psline[linewidth=0.04cm](5.0,3.3024733)(5.0,2.5024734)
\psline[linewidth=0.04cm](6.0,3.3024733)(6.0,2.5024734)
\psline[linewidth=0.04cm](7.0,3.3024733)(7.0,2.5024734)
\usefont{T1}{ptm}{m}{n}
\rput(0.49093747,2.8924732){Disk 0}
\usefont{T1}{ptm}{m}{n}
\rput(1.4796875,2.8924732){Disk 1}
\usefont{T1}{ptm}{m}{n}
\rput(2.50875,2.8924732){Disk 2}
\usefont{T1}{ptm}{m}{n}
\rput(3.49375,2.8924732){Disk 3}
\usefont{T1}{ptm}{m}{n}
\rput(4.51125,2.8924732){Disk 4}
\usefont{T1}{ptm}{m}{n}
\rput(5.5,2.8924732){Disk 5}
\usefont{T1}{ptm}{m}{n}
\rput(6.51,2.8924732){Disk 6}
\usefont{T1}{ptm}{m}{n}
\rput(7.507812,2.8724732){Disk 7}
\usefont{T1}{ptm}{m}{n}
\rput(0.5128125,1.8724734){$D_0$}
\usefont{T1}{ptm}{m}{n}
\rput(1.4928124,1.8724734){$D_0$}
\usefont{T1}{ptm}{m}{n}
\rput(2.4928124,1.8724734){$P_0$}
\usefont{T1}{ptm}{m}{n}
\rput(3.5328124,1.8724734){$Q_0$}
\usefont{T1}{ptm}{m}{n}
\rput(4.4928126,1.8724734){$P_1$}
\usefont{T1}{ptm}{m}{n}
\rput(5.5128126,1.8724734){$Q_1$}
\usefont{T1}{ptm}{m}{n}
\rput(6.5128126,1.8524734){$P_2$}
\usefont{T1}{ptm}{m}{n}
\rput(7.5328126,1.8524734){$Q_2$}
\usefont{T1}{ptm}{m}{n}
\rput(0.5328125,1.1124732){$D_1$}
\usefont{T1}{ptm}{m}{n}
\rput(1.5128125,1.1124734){$D_1$}
\usefont{T1}{ptm}{m}{n}
\rput(2.5128124,1.1124734){$D_3$}
\usefont{T1}{ptm}{m}{n}
\rput(3.5128124,1.1124734){$D_5$}
\usefont{T1}{ptm}{m}{n}
\rput(4.4928126,1.1124734){$P_3$}
\usefont{T1}{ptm}{m}{n}
\rput(5.4928126,0.29247335){$P_6$}
\usefont{T1}{ptm}{m}{n}
\rput(6.5128126,1.0924734){$Q_3$}
\usefont{T1}{ptm}{m}{n}
\rput(7.5328126,1.0924734){$Q_4$}
\usefont{T1}{ptm}{m}{n}
\rput(0.5328125,0.29247332){$D_2$}
\usefont{T1}{ptm}{m}{n}
\rput(1.4928124,0.29247335){$D_2$}
\usefont{T1}{ptm}{m}{n}
\rput(2.5128124,0.29247335){$D_4$}
\usefont{T1}{ptm}{m}{n}
\rput(3.5128124,0.29247335){$D_6$}
\usefont{T1}{ptm}{m}{n}
\rput(4.4928126,0.29247335){$P_5$}
\usefont{T1}{ptm}{m}{n}
\rput(5.5128126,-0.5075267){$D_7$}
\usefont{T1}{ptm}{m}{n}
\rput(6.5328126,0.29247335){$Q_6$}
\usefont{T1}{ptm}{m}{n}
\rput(7.5328126,0.29247335){$Q_5$}
\usefont{T1}{ptm}{m}{n}
\rput(0.5328125,-0.48752666){$D_3$}
\usefont{T1}{ptm}{m}{n}
\rput(2.5128124,-0.5075267){$D_8$}
\usefont{T1}{ptm}{m}{n}
\rput(3.5328124,-0.5075267){$D_8$}
\usefont{T1}{ptm}{m}{n}
\rput(4.5128126,-0.5075267){$D_7$}
\usefont{T1}{ptm}{m}{n}
\rput(5.5128126,-1.3075267){$Q_9$}
\usefont{T1}{ptm}{m}{n}
\rput(6.5128126,-0.5075267){$P_7$}
\usefont{T1}{ptm}{m}{n}
\rput(7.5128126,-0.5075267){$Q_7$}
\usefont{T1}{ptm}{m}{n}
\rput(0.51281244,-1.3075267){$D_4$}
\usefont{T1}{ptm}{m}{n}
\rput(1.5028125,-0.4875267){$D_{10}$}
\usefont{T1}{ptm}{m}{n}
\rput(2.5128124,-1.3075267){$D_9$}
\usefont{T1}{ptm}{m}{n}
\rput(3.5128124,-1.3075267){$D_9$}
\usefont{T1}{ptm}{m}{n}
\rput(4.5128126,-1.3075267){$P_9$}
\usefont{T1}{ptm}{m}{n}
\rput(5.4828124,-2.0875268){$P_{10}$}
\usefont{T1}{ptm}{m}{n}
\rput(6.4928126,-1.3075267){$P_8$}
\usefont{T1}{ptm}{m}{n}
\rput(7.5128126,-1.3075267){$Q_8$}
\usefont{T1}{ptm}{m}{n}
\rput(0.51281244,-2.1075268){$D_5$}
\usefont{T1}{ptm}{m}{n}
\rput(1.4828124,-1.3075267){$D_{11}$}
\usefont{T1}{ptm}{m}{n}
\rput(2.5628126,-2.1075268){$D_{12}$}
\usefont{T1}{ptm}{m}{n}
\rput(3.5028124,-2.1075268){$D_{10}$}
\usefont{T1}{ptm}{m}{n}
\rput(4.5028124,-2.1075268){$P_{11}$}
\usefont{T1}{ptm}{m}{n}
\rput(5.4828124,-2.8875268){$P_{12}$}
\usefont{T1}{ptm}{m}{n}
\rput(6.5228124,-2.1075268){$Q_{11}$}
\usefont{T1}{ptm}{m}{n}
\rput(7.5028124,-2.0875268){$Q_{10}$}
\usefont{T1}{ptm}{m}{n}
\rput(0.51281244,-2.9075267){$D_6$}
\usefont{T1}{ptm}{m}{n}
\rput(1.5228125,-2.1075268){$D_{12}$}
\usefont{T1}{ptm}{m}{n}
\rput(2.5028124,-2.8875268){$D_{13}$}
\usefont{T1}{ptm}{m}{n}
\rput(3.4828124,-2.8875268){$D_{11}$}
\usefont{T1}{ptm}{m}{n}
\rput(4.4828124,-2.8875268){$P_{13}$}
\usefont{T1}{ptm}{m}{n}
\rput(6.5028124,-2.8875268){$Q_{12}$}
\usefont{T1}{ptm}{m}{n}
\rput(7.4828124,-2.8875268){$Q_{13}$}
\usefont{T1}{ptm}{m}{n}
\rput(5.5128126,1.1124734){$P_4$}
\usefont{T1}{ptm}{m}{n}
\rput(1.4628124,-2.9075267){$D_{13}$}
\end{pspicture} 
}
\caption{Unbalanced input leads to unbalanced output}
\label{fig:unbalanced}
\end{figure}

\begin{example}
\label{ex:unbalanced}
Suppose the $3$-design $\D$ in Example~\ref{ex:algo2} and $\G$, an RDP
$2 \times 4$ array, are used in Algorithm~2. Note that $\G$ is an unbalanced
$2$-parity group with the reconstruction rule given in Fig.~\ref{fig:rdp-reconstruction}. 
The layout of the resulting code is depicted in Fig.~\ref{fig:unbalanced}. 

Suppose Disk $0$ and Disk $1$ fail. Let us examine the number of column-units 
on Disk $4$ and $6$, respectively, that need to be accessed for reconstruction of
Disk~$0$ and Disk~$1$. According to the reconstruction rule of each group (Fig.~\ref{fig:rdp-reconstruction}), 
\emph{five} column-units on Disk $4$ must be accessed, whereas only \emph{one}
column-unit on Disk $6$ must be accessed (see Fig.~\ref{fig:accessed-units}). 
Therefore, the workload for reconstruction of the first two disks is not 
uniformly distributed to the surviving disks.   

\begin{figure}[h]
\centering
\begin{tabular}{|c|c|c|c|c|}
\hline
& Disk $0$ & Disk $1$ & Disk $4$ & Disk $6$ \\
\hline
Group $\G_0$ & $D$ & $D$ & X & X \\
Group $\G_1$ & $D$ & $D$ & $\underline{P}$ & X \\
Group $\G_2$ & $D$ & $D$ & X & $\underline{P}$ \\
Group $\G_3$ & $D$ & X & $\underline{P}$ & $Q$ \\
Group $\G_4$ & $D$ & X & X & X \\
Group $\G_5$ & $D$ & X & $\underline{P}$ & X \\
Group $\G_6$ & $D$ & X & X & $Q$ \\
Group $\G_7$ & X & X & $D$ & $P$ \\
Group $\G_8$ & X & X & X & $P$ \\
Group $\G_9$ & X & X & $P$ & X \\
Group $\G_{10}$ & X & $D$ & X & X \\
Group $\G_{11}$ & X & $D$ & $\underline{P}$ & $Q$ \\
Group $\G_{12}$ & X & $D$ & X & $Q$ \\
Group $\G_{13}$ & X & $D$ & $\underline{P}$ & X \\
\hline
\end{tabular}
\caption{Related column-units on Disks $0$, $1$, $4$, and $6$. The underlined entries are those which
must be accessed for reconstruction of Disks $0$ and $1$. An 'X' in a row labeled by Group $\G_i$ and in 
a column labeled by Disk $j$ means that Disk $j$ does not contain any column-unit from $\G_i$.}
\label{fig:accessed-units}
\end{figure}
 
The reason why Algorithm~2 fails to produce a desired array code in the above example
can be explained as follows. Even though the $3$-design spreads out the columns of the $2$-parity
groups evenly among the disks, the columns within each group do not play the same role in the reconstruction
of a lost column. More specifically, the $P$-column and the corresponding $Q$-column
do have different roles in the reconstruction of a $D$-column. Indeed, according to the
reconstruction rule for RDP arrays stated in Fig.~\ref{fig:rdp-reconstruction}, the reconstruction 
of a $D$-column requires the access to the $P$-column, but not to the $Q$-column. 
For example, even though both Disk $4$ and Disk $6$ contain column-units from $\G_3$, 
the column-unit $P_3$ on Disk $4$ must be read, while the column-unit $Q_3$ on Disk $6$
is not read (see Fig.~\ref{fig:accessed-units}). If a balanced $2$-parity group is used instead, 
we will not have this problem, as every column in a balanced $2$-parity group plays the same
role in the reconstruction of a missing column.   
\end{example} 

\subsection{Storage Efficiency and Reconstruction Workload Trade-Off}
\label{subsec:trade-off}

In this subsection we examine the trade-off (of the declustered-parity layout produced by Algorithm~2) between storage efficiency 
and the workload on every disk during the reconstruction of disk failures. 
If an $M \times n$ array code $\C$ contains $x$ parity units and $Mn - x$ data units then we 
say that the number of \emph{disks worth of parity} in $\C$
is $\frac{x}{M}$. The ratio $n - \frac{x}{M}$ is called the number of \emph{disks worth
of data} of $\C$. In other words, $\C$ uses $\frac{x}{M}$ disks to store parities and
$n - \frac{x}{M}$ disks to store data. 

Another attribute of the array code $\C$ produced by Algorithm~2 that 
needs to be examined is the number of rows $M$, or \emph{depth}, of $\C$. 
The depth of $\C$ counts how many units are there in each of its columns. 
An array with fewer rows results in a smaller-size table being stored in the memory and faster (table) look-up. 
Furthermore, a code with a smaller depth provides a better local balance 
(see Schwabe and Sutherland~\cite{SchwabeSutherland1996}). 
The depth of $\C$ depends on $n$, $k$, 
and $\lambda$, as shown in the following theorem. When $n$ and $k$
are fixed, the bigger the index $\lambda$ is, the more rows $\C$ has. 
Therefore, $3$-designs with smaller $\lambda$ are preferred. 
 
\vskip 10pt 
\begin{theorem}
\label{pro:trade-off}
The array code $\C$ produced by Algorithm~2 satisfies the following properties:
\begin{enumerate}
 \item[(P4)] $\C$ has 
\[
M = m\dfrac{\lambda (n-1)(n-2)}{(k-1)(k-2)}
\]
rows, where $m$ is the number of rows in the $2$-parity group $\G$;
	\item[(P5)] $\C$ has $\frac{(k-2)n}{k}$ disks worth of data and $\frac{2n}{k}$ disks worth of parity. 
\end{enumerate}
Moreover, if an RDP/EVENODD/RS $2$-parity group is used in Algorithm~2  
then $\C$ also satisfies the following properties:
\begin{enumerate}
	\item[(P6)] To reconstruct one failed disk, a portion $\frac{k-2}{n-1}$ of the total content of each surviving disk needs to be read; 
	\item[(P7)] To reconstruct two failed disks, a portion $\frac{(k-2)(2n-k-1)}{(n-1)(n-2)}$ of the total content of each surviving disk needs to be read. 
\end{enumerate}
\end{theorem}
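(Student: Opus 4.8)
The plan is to reduce everything to the standard counting identities for $3$-designs, combined with the two per-group workload facts already in hand: Lemma~\ref{lem:hpg-workload} (recovering one column of a balanced RDP/EVENODD/RS $2$-parity group of size $k$ reads a portion $\frac{k-2}{k-1}$ of each other column) and the observation in Example~\ref{ex:bpg-1} that recovering two columns of such a group reads every other column in full. Recall that in a $3$-$(n,k,\lambda)$ design the number of blocks containing a fixed $i$-subset of points is $\lambda_i = \lambda \binom{n-i}{3-i}/\binom{k-i}{3-i}$; in particular the replication number $r = \lambda_1 = \lambda\frac{(n-1)(n-2)}{(k-1)(k-2)}$, the pair-count $\lambda_2 = \lambda\frac{n-2}{k-2}$, the triple-count $\lambda_3 = \lambda$, and $|\B| = \lambda_0 = \lambda\frac{n(n-1)(n-2)}{k(k-1)(k-2)}$. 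These are the only external inputs I need.

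For (P4) I would observe that each disk lies in exactly $r$ blocks, and Step~3 of Algorithm~2 stacks onto that disk one column-unit of $\G_i$, of height $m$, for every block $B_i$ through it; hence $M = mr = m\lambda\frac{(n-1)(n-2)}{(k-1)(k-2)}$. For (P5) I would count units globally: by (C1) each of the $|\B|$ groups contributes $2m$ parity and $(k-2)m$ data entries, so the number of disks worth of parity is $\frac{2m|\B|}{M} = \frac{2|\B|}{r}$; the incidence identity $k|\B| = rn$ gives $|\B|/r = n/k$, yielding $\frac{2n}{k}$ disks worth of parity and $n-\frac{2n}{k} = \frac{(k-2)n}{k}$ disks worth of data.

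The heart of the argument is (P6) and (P7), where I translate per-group portions into per-disk portions. Fixing one failed disk and a survivor $z$, the two co-occur in exactly $\lambda_2$ blocks, and in each such block $z$'s column-unit contributes a portion $\frac{k-2}{k-1}$ of its $m$ entries by Lemma~\ref{lem:hpg-workload}; dividing the total $\lambda_2 m\frac{k-2}{k-1}$ by $M=mr$ and using $\lambda_2/r = \frac{k-1}{n-1}$ collapses to $\frac{k-2}{n-1}$, giving (P6). For (P7), with failed disks $x,y$ and survivor $z$, the decisive step is a three-way split of the blocks through $z$: the $\lambda_3=\lambda$ blocks containing $x,y,z$ together, in which both the $x$- and $y$-columns of $\G_i$ are lost and hence all $m$ entries of $z$'s column-unit must be read; the $\lambda_2-\lambda_3$ blocks containing exactly one of $x,y$ with $z$ (two symmetric families), in which only one column is lost so $z$ contributes $m\frac{k-2}{k-1}$; and the remaining blocks through $z$ touching neither failure, which contribute nothing. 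Summing gives $\lambda m + 2(\lambda_2-\lambda_3)m\frac{k-2}{k-1}$ entries; substituting $\lambda_2-\lambda_3 = \lambda\frac{n-k}{k-2}$ simplifies this to $\lambda m\frac{2n-k-1}{k-1}$, and dividing by $M$ yields $\frac{(k-2)(2n-k-1)}{(n-1)(n-2)}$.

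The main obstacle is exactly this case analysis in (P7): one must correctly distinguish whether a given block sees both failures (forcing a full read of the survivor's column-unit) or only one (forcing the lighter $\frac{k-2}{k-1}$ read), and then match these two regimes to the counts $\lambda_3$ and $\lambda_2-\lambda_3$. Everything else is a mechanical substitution of the design parameters. Finally, the balance of $\G$ (conditions (C3)--(C4)) is what guarantees that these per-disk counts are the same regardless of which survivor $z$ is chosen, so that the portions obtained are genuinely uniform across all surviving disks.
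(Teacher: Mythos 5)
Your proof is correct and follows essentially the same route as the paper: (P4) and (P5) by the same global counting, and (P6)--(P7) by the same three-case split of blocks through a survivor (both failures present, exactly one, neither), which is exactly the paper's Appendix~B decomposition with $\tau_1 = m\frac{k-2}{k-1}$ and $\tau_2 = m$. The only cosmetic difference is that you obtain the count of blocks containing a failed/surviving pair but missing the other failure as $\lambda_2 - \lambda_3 = \frac{\lambda(n-k)}{k-2}$ by inclusion--exclusion, whereas the paper quotes this same quantity $\lambda_2^{(1)}$ directly from Stinson's theorem.
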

\begin{proof} 
Appendix~\ref{appendix-C}. 
\end{proof} 
\vskip 10pt 

When $k = n$, that is, there is no parity declustering involved, Theorem~\ref{pro:trade-off} 
states the familiar facts about an MDS two-failure tolerant array code: $\C$ has $n - 2 = \frac{(n-2)n}{n}$ disks worth of data
and $2 = \frac{2n}{n}$ disks worth of parity; to reconstruct one failed disk, a portion $\frac{n-2}{n-1}$ of the total content of each surviving disk needs to be read; and to reconstruct two failed disks, each surviving disk needs to be read in full ($1 = \frac{(n-2)(2n-n-1)}{(n-1)(n-2)}$). Note that 
the second property does not hold for most of known MDS array codes in 
their original formulations. In fact, it only holds for these codes after some
transformation, such as the one in Example~\ref{ex:bpg-1}, is applied. 

\begin{example}
In this example, we fix the number of disks in the array to be $n = 20$. 
The parity group size $k$ varies from $3$ to $20$. The availability of
a particular $3$-$(n, k, \lambda)$ design can be found in \cite[Part II, 
Table 4.37]{Colbourn2006}. Note that a $t$-design, $t > 3$, is also a $3$-design.
In this table we choose $\lambda$ to be the smallest possible.   

\begin{figure}[h]
\centering
\begin{tabular}{|c|c|c|c|c|c|}
\hline
$k$ & $\lambda$ & $1$ failure & $2$ failures & Parity & 
depth/$m$ \\
\hline
$3$ & 1 & 5.3\% & 10.5\% & 13.3  & 171 \\
$4$ & 1 & 10.5\% & 20.5\% & 10.0  & 57 \\
$5$ & 6 & 15.8\% & 29.8\%  & 8.0 & 171 \\
$6$ & 10 & 21.1\% & 38.6\% & 6.7 & 171 \\
$7$ & 35 & 26.3\% & 46.8\% & 5.7 & 399 \\
$8$ & 14 & 31.6\% & 54.4\% & 5.0 & 114 \\
$9$ & 28 & 36.8\% & 61.4\% & 4.4 & 171 \\
$10$ & 4 & 42.1\% & 67.8\% & 4.0 & 19 \\
$11$ & 55 & 47.4\% & 73.7\% & 3.6 & 209 \\
$12$ & 55 & 52.6\% & 78.9\% & 3.3 & 171 \\
$13$ &286 & 57.9\% & 83.6\% & 3.1 & 741 \\
$14$ & 182 & 63.2\% & 87.7\% & 2.9 & 399 \\
$15$ & 273 & 68.4\% & 91.2\% & 2.7 & 513 \\
$16$ & 140  & 73.7\% & 94.2\% & 2.5 & 228 \\
$17$ & 680 & 78.9\% & 96.5\% & 2.4 & 969 \\
$18$ & 136 & 84.2\% & 98.2\% & 2.2 & 171 \\
$19$ & 17 & 89.5\% & 99.4\% & 2.1 & 19 \\
$20$ & 1 & 94.7\% & 100\% & 2.0 & 1 \\
\hline
\end{tabular}
\caption{Different parity group sizes lead to array codes with different performances ($n = 20$)}
\label{fig:table-n=20}
\end{figure}

The third and fourth columns show the percentage of data/parity units that have to be read
on each surviving disk in order to reconstruct one and two failed disks, respectively.
The fifth column presents the number of parity disks to be used when the corresponding
parity group size $k$ is used. 
The figures in the third, fourth, and fifth columns only 
depend on $n$ and $k$. 
As expected, when $k$ increases, the percentage of units that have to be
accessed for disk recovery increases, and the number of parity disks used decreases. 
Thus, one has to trade the storage efficiency for the reconstruction workload (on each disk): 
increasing storage efficiency, which is good, leads to increasing workload during disk recovery, 
which is bad, and vice versa.  
One extreme is when $k = n$, where there is no parity declustering. 
The array code becomes a normal MDS array code, with two disks worth of parities 
and $100\%$ load on every surviving disk during the reconstruction of two failed disks. 

The figures in the last column are the depths of the resulting
array codes divided by $m$, the depths of the balanced $2$-parity groups $\G$ 
(see Algorithm~2). These figures depend on $n$, $k$, and $\lambda$.   

The ingredient balanced $2$-parity groups $\G$ of size $k$ ($ 3 \leq k \leq 20$) 
can be constructed using the method presented in Example~\ref{ex:bpg-1}. 
This method can be applied to 
an RS code of length $k$ for an arbitrary $k \geq 3$ to obtain a $(k(k-1))\times k$ balanced $2$-parity
group ($m = k(k-1)$).
For an EVENODD code~\cite{BlaumBradyBruckMenon1995}, this method 
produces a $(k(k-1)(k-3)) \times k$ balanced $2$-parity group ($m = k(k-1)(k-3)$), for 
every $k = p + 2$ where $p$ is a prime.  
For an RDP code~\cite{Corbett2004}, this method produces a $(k(k-1)(k-2)) \times k$ 
balanced $2$-parity group ($m = k(k-1)(k-2)$), for every $k = p + 1$ where $p$ is a prime.   
\end{example} 

\vskip 10pt
\begin{remark}
Corbett introduced in his patent \cite{Corbett2008} a method to mix $n/2$ data disks from one array code with $n/2$ data disks from another code to produce an array code that has $n$ data disks. When 
one or two disks fail, the reconstruction workload is distributed evenly to all surviving data disks (but not to all data/parity disks). 
His method actually uses the \emph{complete} $3$-$(n, n/2, \lam)$ design $(\X, \B)$ where all $(n/2)$-subsets of $\X$ are blocks.
In fact, any \emph{self-complementary} $3$-designs would work well with his construction (a design is self-complementary if it satisfies that $B\in \B$ if and only if $\X \setminus B \in \B$). The Hadamard $3$-$(n, n/2, n/4-1)$ design is such a design (see~\cite{AssmusKey1992}). 
Using a Hadamard design results in an array code of only $m(n - 1)$ rows, where $m$ is the 
depth of the original array codes. By contrast, the construction in \cite{Corbett2008} produces an array code of an
extremely large depth $m\binom{n}{n/2}$.  
\end{remark}

\section{Parity Declustering for $(t-1)$-Failure Tolerant Codes via $t$-designs} 
\label{sec:pd-(t-1)-failures}

The generalization of Algorithm~2 to Algorithm~3 below that works for 
$(t-1)$-failure tolerant codes ($t \geq 2$) is straightforward. 

\vskip 10pt
\nin {\bf Algorithm 3}
\begin{itemize}
  \item {\bf Input:} $n$ is the number of physical disks in the array and $k$ is the parity group size.
	\item {\bf Step 1:} Choose a balanced $(t-1)$-parity group $\G$ of size $k$.
	\item {\bf Step 2:} Choose a $t$-$(n,k,\lam)$ design $\D = (\X,\B)$.
	\item {\bf Step 3:} For each block $B_i = \{b_{i,0},\ldots,b_{i,k-1}\} \in \B$, $0 \leq i < |\B|$, 
	create a balanced $(t-1)$-parity group $\G_i$ as follows. 
	Firstly, $\G_i$ must have the same data-parity pattern and the same reconstruction rule as $\G$. 
	Secondly, the $k$ columns of $\G_i$ are located on disks with labels $b_{i,0},\ldots,b_{i,k-1}$. 
	\item {\bf Output:} The $n$-disk array with $|\B|$ parity groups and their layouts according to Step~3.  
\end{itemize}
\vskip 10pt

Relevant $t$-designs can be found in \cite[Part II, Table 4.37]{Colbourn2006} and in the references therein. 
The ingredient balanced $(t-1)$-parity group $\G$ in Algorithm~3 can be constructed by applying the method
in Example~\ref{ex:bpg-1} to any MDS horizontal array code that tolerates $t-1$ disk failures. 
More specifically, suppose that the original array code has $k - t + 1$ data columns ($D$) and $t-1$ parity columns, 
namely $P_i$-columns, $i = 1,\ldots, t-1$. There are $(t-1)!\binom{k}{t-1}$ ways to 
arrange the parity columns of the original array. For each of such arrangements, we obtain a new array. 
By juxtaposing vertically all of these $(t-1)!\binom{k}{t-1}$ arrays, we obtain a balanced 
$(t-1)$-parity group. 
The proof that the above method works for general $t$ is almost the same
as for $t = 3$. For example, for $t = 4$, instead of considering just $r_{DQ}$, $r_{PQ}$, and $r_{QP}$, 
we now need to consider other quantities, such as $r_{P_1P_2}$, $r_{DP_1P_2}$, or $r_{P_1P_2P_3}$. 
They are, in fact, all constants. Therefore, the arguments go the same way as in Example~\ref{ex:bpg-1}. 
We will not provide a detailed proof here. 

Except from the well-known RS codes, some
other known MDS horizontal $(t-1)$-failure tolerant codes ($t > 3$) were studied by 
Blomer {\et}~\cite{Blomer-et-1995}, Blaum {\et} \cite{Blaum-et-1996, Blaum-et-2002}, 
Huang and Xu~\cite{HuangXu2008}.    

\section{Conclusion}
\label{sec:conclusion}

We propose a way to extend the parity declustering technique 
to multiple-failure tolerant array codes based on balanced $(t-1)$-parity groups and $t$-designs ($t \geq 2$). 
Balanced $(t-1)$-parity groups can be obtained from any known horizontal array codes that tolerate up to $t-1$ disk failures.
Besides, $t$-design is a very well-studied combinatorial object in 
the theory of Combinatorial Designs.     
Therefore, one of the advantages of our approach is that we can 
exploit the rich literature from both Erasure Codes theory and 
Combinatorial Designs theory. 

The second advantage of the approach based on $t$-designs is its flexibility. By simply using different $t$-designs in the array code construction, one can obtain a variety of
different trade-offs between storage efficiency and the recovery time.
Note that $\D = (\X, \B)$ where $\B$ consists of all $k$-subset of $\X$ is a $t$-design (called the trivial design) for any $1 \leq t \leq k \leq n$. 
Therefore, for any given number of disks $n$ and any given parity group size $k \leq n$, there always
exists a $t$-$(n, k, \lam)$ design for some $\lambda$. 

One disadvantage of this approach is that sometimes, the 
smallest $t$-design still has an unacceptably large index $\lam$, which 
leads to an impractically deep array code. 
A natural question to ask is whether the depth of the array code, 
in those cases, can be reduced if we relax some requirements on
the array code. A similar question, which is aimed to one-failure tolerant array codes, has already been discussed by Schwabe and Sutherland \cite{SchwabeSutherland1996}. 
Another open question is on the issue of constructing a balanced $(t-1)$-parity group. In this work, we show that \emph{horizontal} array codes can be employed to produce such parity groups. However, the question
of whether \emph{vertical} array codes can also be useful is still open.  

\section{Acknowledgment}
The first author thanks Xing Chaoping for helpful discussions. 

\bibliographystyle{plain}
\bibliography{Parity-Declustering-via-t-designs}
  
\appendices

\section{Proof of Lemma~\ref{lem:hpg-workload}}
\label{appendix-A}
From Example~\ref{ex:bpg-1}, for the recovery of one lost column of $\G$, 
one needs to read
\[
\begin{split}
k(k-1) - r_{DQ} - r_{PQ} &- r_{QP}\\ 
&= k(k-1) - (k - 2) - 1 - 1\\
&= k(k-2)
\end{split}
\]
column-entries in each of the other columns. As each column contains 
$k(k-1)$ column-entries, the portion of content of each column that has
to be accessed is
\[
\dfrac{k(k-2)}{k(k-1)} = \dfrac{k-2}{k-1}. 
\]

\section{Proof of Theorem~\ref{thm:main1}}
\label{appendix-B}

\subsection{Known Results from Design Theory}

The following results from Design Theory are useful in our discussion. 

\vskip 10pt 
\begin{theorem}(\cite[Theorem 9.7]{Stinson2003})
\label{thm:Stinson}
Suppose that $(\X, \B)$ is a $t$-$(n, k, \lambda)$ design. 
Suppose that $Y, Z \subseteq \X$, where $Y \cap Z = \varnothing$, 
$|Y| = i$, $|Z| = j$, and $i + j \leq t$. Then there are exactly
\[
\lambda_i^{(j)} = \dfrac{\lambda\binom{n - i - j}{k - i}}{\binom{n - t}{k - t}}
\] 
blocks in $\B$ that contain all the points in $Y$ and none of the points in $Z$. In particular, 
\[
|\B| = \lam_0^{(0)} = \dfrac{\lam n(n-1)\cdots(n-t+1)}{k(k-1)\cdots(k-t+1)}. 
\]
\end{theorem}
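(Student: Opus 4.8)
The plan is to reduce everything to the special case $j = 0$ (no forbidden points) and then peel off the forbidden points one at a time by a short induction on $j$.

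First I would settle the base case $j = 0$ by double counting. Fix an $i$-subset $Y \seq \X$ with $i \le t$ and count incidence pairs $(T, B)$ in which $T$ is a $t$-subset with $Y \seq T$ and $B \in \B$ is a block with $T \seq B$. Counting by $T$ first, there are $\binom{n-i}{t-i}$ choices of $t$-subset containing $Y$, and each lies in exactly $\lam$ blocks, giving $\lam\binom{n-i}{t-i}$ pairs. Counting by $B$ first, a block $B \supseteq Y$ contributes exactly $\binom{k-i}{t-i}$ pairs (choose the remaining $t-i$ points of $T$ from the $k-i$ points of $B$ outside $Y$), so if $\lam_i^{(0)}$ denotes the number of blocks through $Y$, the pair count is $\lam_i^{(0)}\binom{k-i}{t-i}$. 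Equating yields $\lam_i^{(0)} = \lam\binom{n-i}{t-i}/\binom{k-i}{t-i}$, which in particular is independent of the choice of $Y$, and a routine factorial rearrangement rewrites it as $\lam\binom{n-i}{k-i}/\binom{n-t}{k-t}$, matching the claimed formula when $j=0$.

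For the inductive step I would fix disjoint $Y, Z$ with $|Y| = i$, $|Z| = j \ge 1$, and $i + j \le t$, choose a point $z \in Z$, and set $Z' = Z \setminus \{z\}$. The blocks that contain $Y$ and miss $Z'$ split according to whether they contain $z$: those missing $z$ are counted by $\lam_i^{(j)}$, while those containing $z$ are precisely the blocks through $Y \cup \{z\}$ that miss $Z'$, counted by $\lam_{i+1}^{(j-1)}$ (note $(i+1)+(j-1) = i+j \le t$, so this quantity is defined and, by the inductive hypothesis, independent of the chosen sets). This gives the recursion
\[
\lam_i^{(j)} = \lam_i^{(j-1)} - \lam_{i+1}^{(j-1)}.
\]
Substituting the inductive hypothesis turns the right-hand side into
\[
\frac{\lam}{\binom{n-t}{k-t}}\left[\binom{n-i-j+1}{k-i} - \binom{n-i-j}{k-i-1}\right],
\]
and Pascal's rule $\binom{n-i-j+1}{k-i} = \binom{n-i-j}{k-i} + \binom{n-i-j}{k-i-1}$ collapses the bracket to $\binom{n-i-j}{k-i}$, yielding the desired closed form. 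Finally, putting $i = j = 0$ gives $|\B| = \lam\binom{n}{k}/\binom{n-t}{k-t}$, and expanding this binomial ratio as a factorial quotient produces the stated product $\lam\,n(n-1)\cdots(n-t+1)/\big(k(k-1)\cdots(k-t+1)\big)$.

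The only real subtlety, and the step I would be most careful about, is the well-definedness threaded through the argument: a priori $\lam_i^{(j)}$ might depend on which sets $Y, Z$ are chosen. The base-case double count settles this unconditionally for $j=0$, and the recursion then propagates independence from the levels $(i+1,j-1)$ and $(i,j-1)$ up to $(i,j)$, so no circularity arises. Everything remaining is bookkeeping with binomial coefficients.
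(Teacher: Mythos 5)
Your proof is correct. There is, however, nothing in the paper to compare it against: the paper does not prove this statement at all, but quotes it as Theorem~9.7 of Stinson's textbook \cite{Stinson2003}, and Appendix~\ref{appendix-B} only derives the specializations (Corollaries~\ref{cr:Stinson0}, \ref{cr:Stinson1}, \ref{cr:Stinson2}) by plugging in parameters. Your argument is essentially the standard design-theory proof that the cited source gives: a double count of incidence pairs $(T,B)$ with $Y \seq T \seq B$ to get the $j=0$ case $\lam_i^{(0)} = \lam\binom{n-i}{t-i}/\binom{k-i}{t-i}$ (and hence independence of the choice of $Y$), followed by the recursion $\lam_i^{(j)} = \lam_i^{(j-1)} - \lam_{i+1}^{(j-1)}$ obtained by splitting the blocks through $Y$ avoiding $Z\setminus\{z\}$ according to whether they contain $z$, collapsed by Pascal's rule. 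You also correctly flag and dispose of the one genuine subtlety, namely that well-definedness of $\lam_i^{(j)}$ is part of the claim: the base case establishes it outright, and the recursion propagates it from level $j-1$ to level $j$ with no circularity. The closing computation identifying $\lam\binom{n}{k}/\binom{n-t}{k-t}$ with $\lam\, n(n-1)\cdots(n-t+1)/\bigl(k(k-1)\cdots(k-t+1)\bigr)$ is likewise correct, so your write-up supplies a complete, self-contained proof of a result the paper treats as imported background.
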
 

\begin{corollary}
\label{cr:Stinson0}
Suppose that $(\X, \B)$ is a $3$-$(n, k, \lambda)$ design. 
Then any point $x$ of $\X$ is contained in precisely
\begin{equation}
\label{eq:lam_1}
\lambda_1 = \dfrac{\lambda(n-1)(n-2)}{(k-1)(k-2)}
\end{equation}
blocks.
\end{corollary}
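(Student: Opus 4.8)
The plan is to derive this as an immediate specialization of Theorem~\ref{thm:Stinson}. That theorem computes, for a $t$-$(n,k,\lambda)$ design and disjoint point sets $Y,Z$ with $|Y|=i$, $|Z|=j$ and $i+j\le t$, the number $\lambda_i^{(j)}$ of blocks containing all of $Y$ and none of $Z$. Counting the blocks through a single fixed point $x$ is exactly the case $t=3$, $Y=\{x\}$, $Z=\varnothing$, i.e.\ $i=1$, $j=0$, which satisfies the hypothesis $i+j=1\le 3=t$. So the first step is simply to invoke Theorem~\ref{thm:Stinson} with these parameters, obtaining
\[
\lambda_1 = \lambda_1^{(0)} = \dfrac{\lambda\binom{n-1}{k-1}}{\binom{n-3}{k-3}}.
\]

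The second step is the purely algebraic simplification of this ratio of binomial coefficients. Writing both binomials as factorials, the $(n-k)!$ in the denominators cancels, and one is left with
\[
\dfrac{\binom{n-1}{k-1}}{\binom{n-3}{k-3}} = \dfrac{(n-1)!\,(k-3)!}{(n-3)!\,(k-1)!} = \dfrac{(n-1)(n-2)}{(k-1)(k-2)},
\]
where the last equality just cancels the common factor $(n-3)!$ against $(n-1)(n-2)(n-3)!$ in the numerator and $(k-3)!$ against $(k-1)(k-2)(k-3)!$ in the denominator. Multiplying by $\lambda$ yields the claimed formula~\eqref{eq:lam_1}.

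Since the result is a direct corollary, there is essentially no obstacle; the only thing to watch is that the hypotheses of Theorem~\ref{thm:Stinson} genuinely apply (they do, as $i+j\le t$ holds). As a self-contained alternative that avoids quoting the full theorem, I would instead double count pairs $(S,B)$ where $S$ is a $3$-subset with $x\in S$ and $B$ is a block with $S\subseteq B$: each of the $\binom{n-1}{2}$ triples through $x$ lies in exactly $\lambda$ blocks, while each of the $\lambda_1$ blocks through $x$ contains exactly $\binom{k-1}{2}$ such triples, giving $\lambda_1\binom{k-1}{2}=\lambda\binom{n-1}{2}$ and hence the same value of $\lambda_1$. Either route reduces the proof to a one-line identity.
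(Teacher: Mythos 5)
Your proposal is correct and follows exactly the paper's own route: the paper's proof is precisely the one-line invocation of Theorem~\ref{thm:Stinson} with $t=3$, $Y=\{x\}$, $Z=\varnothing$, and your binomial simplification $\binom{n-1}{k-1}/\binom{n-3}{k-3}=\frac{(n-1)(n-2)}{(k-1)(k-2)}$ just makes explicit the algebra the paper leaves to the reader. The alternative double-counting argument you sketch is also valid and self-contained, but it is supplementary rather than a different proof of record.
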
 
\begin{proof}
Let $t = 3$, $Y = \{x\}$ and $Z = \varnothing$ and apply Theorem~\ref{thm:Stinson}. 
\end{proof} 

\begin{corollary}
\label{cr:Stinson1}
Suppose that $(\X, \B)$ is a $3$-$(n, k, \lambda)$ design. 
Then any two distinct points $x$ and $y$ in $\X$ are contained in
precisely
\begin{equation}
\label{eq:lam_2}
\lambda_2 =  \dfrac{\lambda(n-2)}{k-2}
\end{equation}
blocks.
\end{corollary}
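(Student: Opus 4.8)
The plan is to invoke Theorem~\ref{thm:Stinson} directly, in exactly the same manner as the proof of Corollary~\ref{cr:Stinson0}. First I would set $t = 3$, take $Y = \{x, y\}$ so that $i = |Y| = 2$, and take $Z = \varnothing$ so that $j = |Z| = 0$. Since $Y \cap Z = \varnothing$ and $i + j = 2 \le 3 = t$, the hypotheses of Theorem~\ref{thm:Stinson} are met, and the number of blocks containing both $x$ and $y$ (and trivially avoiding the empty set $Z$) is
\[
\lambda_2^{(0)} = \dfrac{\lambda \binom{n-2}{k-2}}{\binom{n-3}{k-3}}.
\]

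The second step is to simplify this ratio of binomial coefficients into the claimed closed form. Writing both binomials as factorial quotients, the common $(n-k)!$ factors in their denominators cancel, leaving
\[
\dfrac{\binom{n-2}{k-2}}{\binom{n-3}{k-3}} = \dfrac{(n-2)!\,(k-3)!}{(k-2)!\,(n-3)!} = \dfrac{n-2}{k-2},
\]
so that $\lambda_2^{(0)} = \frac{\lambda(n-2)}{k-2} = \lambda_2$, which is precisely \eqref{eq:lam_2}.

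There is no genuine obstacle here: the statement is an immediate specialization of Theorem~\ref{thm:Stinson}, and the only computation is the routine factorial cancellation above. As a self-contained alternative that avoids quoting Theorem~\ref{thm:Stinson}, I could instead use a double-counting argument: fixing the pair $\{x,y\}$, count incidences $(z, B)$ with $z \in \X \setminus \{x,y\}$ and $\{x,y,z\} \subseteq B$ in two ways. Summing over the $n-2$ choices of $z$, each triple $\{x,y,z\}$ lies in exactly $\lambda$ blocks by the $3$-design property, giving $(n-2)\lambda$ incidences; summing over the $\lambda_2$ blocks through $\{x,y\}$, each contributes its remaining $k-2$ points, giving $\lambda_2(k-2)$ incidences. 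Equating yields $\lambda(n-2) = \lambda_2(k-2)$, hence the formula. I would present the one-line appeal to Theorem~\ref{thm:Stinson} as the proof proper and reserve the double count as a cross-check.
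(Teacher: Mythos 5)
Your proof is correct and is essentially identical to the paper's: the paper also proves this corollary by applying Theorem~\ref{thm:Stinson} with $t=3$, $Y=\{x,y\}$, $Z=\varnothing$, merely leaving the binomial simplification $\binom{n-2}{k-2}/\binom{n-3}{k-3}=\frac{n-2}{k-2}$ implicit where you spell it out. The double-counting cross-check is a nice self-contained alternative, but it is supplementary rather than a different route, since your proof proper coincides with the paper's.
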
 
\begin{proof}
Let $t = 3$, $Y = \{x,y\}$ and $Z = \varnothing$ and apply Theorem~\ref{thm:Stinson}. 
\end{proof} 

\begin{corollary}
\label{cr:Stinson2}
Suppose that $(\X, \B)$ is a $3$-$(n, k, \lambda)$ design.  
Suppose that $x$, $y$, and $z$ are three distinct points in $\X$. 
Then the number of blocks in $\B$ that contain both $x$ and $y$ 
but not $z$ is
\begin{equation}
\label{eq:lam_2^1}
\lambda_2^{(1)} =  \dfrac{\lambda(n - k)}{k-2}. 
\end{equation}
\end{corollary}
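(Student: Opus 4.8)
The plan is to apply Theorem~\ref{thm:Stinson} directly. I would take $t = 3$, $Y = \{x, y\}$, and $Z = \{z\}$. Since $x$, $y$, $z$ are three distinct points, the sets $Y$ and $Z$ are disjoint, and $|Y| + |Z| = 2 + 1 = 3 = t$, so the hypotheses of the theorem are satisfied. With $i = 2$ and $j = 1$, the theorem immediately gives that the number of blocks containing both $x$ and $y$ but none of $z$ is
\[
\lambda_2^{(1)} = \dfrac{\lambda \binom{n-3}{k-2}}{\binom{n-3}{k-3}},
\]
so the only remaining task is to simplify this binomial ratio.

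To finish this route, I would use the elementary identity $\binom{N}{r}/\binom{N}{r-1} = (N - r + 1)/r$ with $N = n-3$ and $r = k-2$. This yields $(N-r+1)/r = (n-k)/(k-2)$, and hence $\lambda_2^{(1)} = \lambda(n-k)/(k-2)$, exactly as claimed.

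A second, perhaps more transparent, approach is inclusion-exclusion built from the two preceding corollaries. By Corollary~\ref{cr:Stinson1}, the number of blocks containing both $x$ and $y$ is $\lambda_2 = \lambda(n-2)/(k-2)$. Among these, the blocks that \emph{also} contain $z$ are precisely the blocks containing the $3$-subset $\{x, y, z\}$, and by the defining property of a $3$-design there are exactly $\lambda$ of them. Subtracting the second count from the first gives
\[
\lambda_2^{(1)} = \lambda_2 - \lambda = \dfrac{\lambda(n-2)}{k-2} - \lambda = \dfrac{\lambda\big[(n-2)-(k-2)\big]}{k-2} = \dfrac{\lambda(n-k)}{k-2}.
\]

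I do not expect any genuine obstacle here: the statement is an immediate specialization of Theorem~\ref{thm:Stinson}, and both routes are very short. The only place requiring a moment of care is the routine binomial simplification in the first approach; the inclusion-exclusion argument avoids even that by reducing everything to the already-established counts $\lambda_2$ and $\lambda$.
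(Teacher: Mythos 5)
Your first route is exactly the paper's own proof: take $t=3$, $Y=\{x,y\}$, $Z=\{z\}$ in Theorem~\ref{thm:Stinson} and simplify $\lambda\binom{n-3}{k-2}/\binom{n-3}{k-3}$ to $\lambda(n-k)/(k-2)$, which your binomial identity does correctly. Your alternative inclusion--exclusion derivation ($\lambda_2^{(1)} = \lambda_2 - \lambda$) is also valid, but since your primary argument coincides with the paper's, nothing further is needed.
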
 
\begin{proof}
Let $t = 3$, $Y = \{x, y\}$, $Z= \{z\}$, and apply Theorem~\ref{thm:Stinson}.  
\end{proof} 
\vskip 10pt 

Now we are ready to prove Theorem~\ref{thm:main1}. 
Let $\C$ be the array code produced by Algorithm~2. 
Suppose that in $\G$ (and hence in every $\G_i$), to recover one (two) missing column, 
precisely $\tau_1$ ($\tau_2$) entries have to be read from every other column.

\subsection{Proof of $\C$ satisfying (P3)}

First note that due to Corollary~\ref{cr:Stinson0}, each column of $\C$
contains precisely $\lambda_1 = \frac{\lam (n-1)(n-2)}{(k-1)(k-2)}$ column-units.
Therefore, each column of $\C$ contains the same number of units. 
Also, as each column of $\G_i$ (that is, each column-unit of $\C$) contains 
the same number of \emph{parity}
units for all $0 \leq i < |\B|$, each column of $\C$ contains the same
number of \emph{parity} units. Thus $\C$ satisfies (P3). 

\subsection{Proof of $\C$ satisfying (P1)}
 
According to Definition~\ref{def:pg}, each $2$-parity group can recover up to two missing columns. 
Moreover, according to Algorithm~2, no two columns of the same group are located (as column-units) in the same column of $\C$. 
Therefore, $\C$ can tolerate up to two disk failures.
Thus $\C$ satisfies (P1).    

\subsection{Proof of $\C$ satisfying (P2)} 

Suppose Disk $y$ of $\C$ fails. 
Let $x$ be an arbitrary surviving disk of $\C$. 
%%%%%%%%%%%%%%%%%%%%%%%%%%%%%%%
\begin{figure}[h]
\centering
\scalebox{1} % Change this value to rescale the drawing.
{
\begin{pspicture}(0,-3.263125)(4.3501587,3.263125)
\definecolor{color1984b}{rgb}{0.8,0.8,0.8}
\psframe[linewidth=0.04,dimen=outer](3.6613774,2.4846876)(2.9213774,-0.7053126)
\psframe[linewidth=0.04,dimen=outer](1.4213772,2.4846876)(0.6813772,-0.7053126)
\psline[linewidth=0.098000005cm](2.5813775,2.3246875)(4.1413774,-0.3353126)
\psline[linewidth=0.098000005cm](4.0013776,2.3046875)(2.4213774,-0.31531262)
\usefont{T1}{ptm}{m}{n}
\rput(3.2863774,3.0746875){Disk $y$}
\usefont{T1}{ptm}{m}{n}
\rput(1.061065,3.0746875){Disk $x$}
\psframe[linewidth=0.04,dimen=outer,fillstyle=solid,fillcolor=color1984b](3.6598148,0.19468741)(2.919815,-0.38531262)
\psframe[linewidth=0.04,dimen=outer,fillstyle=solid,fillcolor=color1984b](1.4198148,1.2946875)(0.6798149,0.69468737)
\usefont{T1}{ptm}{m}{n}
\rput(3.2740335,-0.0953126){$u_y$}
\usefont{T1}{ptm}{m}{n}
\rput(1.0240338,1.0046873){$u_x$}
\psbezier[linewidth=0.02,arrowsize=0.05291667cm 2.0,arrowlength=1.4,arrowinset=0.4]{->}(3.0999997,-1.8368752)(3.2799997,-1.3568752)(3.319815,-1.1769956)(3.319815,-0.3453126)
\psbezier[linewidth=0.02,arrowsize=0.05291667cm 2.0,arrowlength=1.4,arrowinset=0.4]{->}(1.2999997,-1.8568752)(1.0799997,-0.1368751)(1.2629323,-1.5853126)(1.0599997,0.7346874)
\usefont{T1}{ptm}{m}{n}
\rput(2.1740334,-3.0353124){$\G_i$}
\psframe[linewidth=0.04,dimen=outer](4.1598153,-1.8653127)(0.19981492,-2.4853125)
\psbezier[linewidth=0.02](2.2119725,-2.7879233)(2.279002,-2.5132225)(4.3401585,-2.7879233)(4.155827,-2.5321674)
\psbezier[linewidth=0.02](2.22873,-2.7879233)(2.1281855,-2.50375)(0.0,-2.7879233)(0.23460315,-2.5416398)
\psframe[linewidth=0.04,dimen=outer,fillstyle=solid,fillcolor=color1984b](1.6598148,-1.8653127)(0.9198149,-2.4653125)
\psframe[linewidth=0.04,dimen=outer,fillstyle=solid,fillcolor=color1984b](3.4998147,-1.8653127)(2.7598147,-2.4653125)
\end{pspicture} 
}
\caption{One disk fails}
\label{fig:1-disk-fails}
\end{figure}
%%%%%%%%%%%%%%%%%%%%%%%%%%%%
According to Corollary~\ref{cr:Stinson1}, in points/blocks language, there are $\lambda_2$
blocks in $\B$ that contain both points $x$ and $y$ of $\X$.
Translated to disks/groups language, there are $\lambda_2$
pairs of column-units $(u_x, u_y)$, where $u_x$ is in Disk $x$, $u_y$ is in Disk $y$ and $u_x$ and $u_y$ are from the same $2$-parity group.
For such a pair of column-units $(u_x, u_y)$, in order to recover 
$u_y$, precisely $\tau_1$ units have to be read from $u_x$. Therefore, 
$\lambda_2 \tau_1$ units have to be read from Disk $x$ for the recovery 
of Disk $y$. This number of units is a constant for every pair of Disk $x$
and $y$. Hence, when one disk fails, the reconstruction workload is 
uniformly distributed to all surviving disks. 

Now suppose that Disk $y$ and Disk $z$ of $\C$ fail.
Let $x$ be an arbitrary surviving disk of $\C$. 
A column-unit $u_x$ in Disk $x$ is involved in the reconstruction 
of the two failed disks if and only if one of the following three 
cases holds.
%%%%%%%%%%%%%%%%%%%%%%%
\begin{itemize}
\item {\bf Case 1:} There exist column-units $u_y$ in Disk $y$ and 
	$u_z$ in Disk $z$ so that $u_x$, $u_y$, and $u_z$ all belong to some $2$-parity group $\G_i$. 
	In this case, as $\G_i$ loses two columns, namely $u_y$ and $u_z$, $\tau_2$ units have to be read from
	$u_x$ for the recovery of the lost columns. According to the definition 
	of a $3$-design, there are precisely $\lambda$ such triples $(u_x, u_y, u_z)$.
	%%%%%%%%%%%%%%%%%%%%
	\begin{figure}[h]
   \centering
\scalebox{1} % Change this value to rescale the drawing.
{
\begin{pspicture}(0,-3.263125)(6.2699428,3.263125)
\definecolor{color1553b}{rgb}{0.8,0.8,0.8}
\psframe[linewidth=0.04,dimen=outer](1.3013774,2.4846876)(0.5613774,-0.70531267)
\psframe[linewidth=0.04,dimen=outer](5.6213775,2.4846876)(4.8813767,-0.70531267)
\psline[linewidth=0.098000005cm](4.5613775,2.3246875)(6.1213775,-0.33531266)
\psline[linewidth=0.098000005cm](5.981377,2.3046875)(4.401377,-0.31531268)
\usefont{T1}{ptm}{m}{n}
\rput(0.94418997,3.0546875){Disk $x$}
\usefont{T1}{ptm}{m}{n}
\rput(5.2488775,3.0746875){Disk $z$}
\psframe[linewidth=0.04,dimen=outer,fillstyle=solid,fillcolor=color1553b](1.2998148,1.4946872)(0.55981493,0.91468734)
\psframe[linewidth=0.04,dimen=outer,fillstyle=solid,fillcolor=color1553b](5.619815,-0.10531265)(4.8798146,-0.70531267)
\usefont{T1}{ptm}{m}{n}
\rput(0.86543983,1.2046875){$u_x$}
\usefont{T1}{ptm}{m}{n}
\rput(5.18544,-0.39531264){$u_z$}
\psbezier[linewidth=0.02,arrowsize=0.05291667cm 2.0,arrowlength=1.4,arrowinset=0.4]{->}(1.3891238,-1.8853128)(1.4599997,-2.1853125)(0.89999974,-1.036875)(0.9463135,0.9031248)
\usefont{T1}{ptm}{m}{n}
\rput(3.1154397,-3.0353124){$\G_i$}
\psframe[linewidth=0.04,dimen=outer](6.079815,-1.8653127)(0.19981492,-2.4853125)
\psbezier[linewidth=0.02](3.1903956,-2.7879233)(3.287074,-2.5132225)(6.259943,-2.7879233)(5.9940767,-2.5321674)
\psbezier[linewidth=0.02](3.2145653,-2.7879233)(3.0695472,-2.50375)(0.0,-2.7879233)(0.3383753,-2.5416398)
\psframe[linewidth=0.04,dimen=outer,fillstyle=solid,fillcolor=color1553b](1.7798148,-1.8653127)(1.0398148,-2.476875)
\psframe[linewidth=0.04,dimen=outer,fillstyle=solid,fillcolor=color1553b](3.4998147,-1.8653127)(2.7598147,-2.476875)
\psframe[linewidth=0.04,dimen=outer](3.3813775,2.4846876)(2.6413774,-0.70531267)
\psline[linewidth=0.098000005cm](2.3013773,2.3246875)(3.8613772,-0.33531266)
\psline[linewidth=0.098000005cm](3.7213774,2.3046875)(2.1413774,-0.31531268)
\usefont{T1}{ptm}{m}{n}
\rput(3.01419,3.0546875){Disk $y$}
\psframe[linewidth=0.04,dimen=outer,fillstyle=solid,fillcolor=color1553b](3.3798149,0.35468736)(2.6398149,-0.22531265)
\usefont{T1}{ptm}{m}{n}
\rput(2.95544,0.06468735){$u_y$}
\psbezier[linewidth=0.02,arrowsize=0.05291667cm 2.0,arrowlength=1.4,arrowinset=0.4]{->}(3.1399996,-1.8568753)(2.9999998,-1.6168753)(2.9999998,-1.1168753)(3.0199997,-0.23687515)
\psframe[linewidth=0.04,dimen=outer,fillstyle=solid,fillcolor=color1553b](5.519815,-1.8653127)(4.7798147,-2.476875)
\psbezier[linewidth=0.02,arrowsize=0.05291667cm 2.0,arrowlength=1.4,arrowinset=0.4]{->}(5.2,-1.876875)(5.16,-1.8568753)(5.2998147,-1.5169959)(5.2998147,-0.6853126)
\end{pspicture} 
}
\caption{Case 1}
\label{fig:case1}
	\end{figure}
	%%%%%%%%%%%%%%%%%%%%%%%
\item {\bf Case 2:} There exists a column-unit $u_y$ in Disk $y$ such that that $u_x$ and $u_y$ belong to some $2$-parity group $\G_i$ and moreover, none of the columns of $\G_i$ are located in Disk $z$. In this case, as $\G_i$ loses only one column, namely $u_y$, $\tau_1$ units have to be read from
	$u_x$ for the recovery of this lost column. According to Corollary~\ref{cr:Stinson2}, there are precisely $\lambda_2^{(1)}$ such pairs $(u_x, u_y)$.
	\begin{figure}[h]
	\centering
	\scalebox{1} % Change this value to rescale the drawing.
{
\begin{pspicture}(0,-3.263125)(6.2699428,3.263125)
\definecolor{color1553b}{rgb}{0.8,0.8,0.8}
\psframe[linewidth=0.04,dimen=outer](3.6413774,2.4846876)(2.9013774,-0.7053127)
\psframe[linewidth=0.04,dimen=outer](1.2813773,2.4846876)(0.5413768,-0.7053127)
\psline[linewidth=0.098000005cm](2.5613775,2.3246875)(4.1213775,-0.33531272)
\psline[linewidth=0.098000005cm](3.9813774,2.3046875)(2.4013774,-0.31531274)
\usefont{T1}{ptm}{m}{n}
\rput(3.278565,3.0546875){Disk $y$}
\usefont{T1}{ptm}{m}{n}
\rput(0.91325235,3.0746875){Disk $x$}
\psframe[linewidth=0.04,dimen=outer,fillstyle=solid,fillcolor=color1553b](3.6398149,0.19468732)(2.8998148,-0.38531274)
\psframe[linewidth=0.04,dimen=outer,fillstyle=solid,fillcolor=color1553b](1.2798148,1.2946875)(0.53981483,0.69468725)
\usefont{T1}{ptm}{m}{n}
\rput(3.196846,-0.0953127){$u_y$}
\usefont{T1}{ptm}{m}{n}
\rput(0.8268463,1.0046873){$u_x$}
\psbezier[linewidth=0.02,arrowsize=0.05291667cm 2.0,arrowlength=1.4,arrowinset=0.4]{->}(3.0399997,-1.8568753)(3.0399997,-1.676875)(3.4799998,-1.1968751)(3.2599998,-0.3453127)
\psbezier[linewidth=0.02,arrowsize=0.05291667cm 2.0,arrowlength=1.4,arrowinset=0.4]{->}(1.31,-1.876875)(1.2199997,-1.796875)(0.8599997,-1.396875)(0.9198149,0.73468727)
\usefont{T1}{ptm}{m}{n}
\rput(3.0968459,-3.0353124){$\G_i$}
\psframe[linewidth=0.04,dimen=outer](6.079815,-1.8653127)(0.19981492,-2.4853125)
\psbezier[linewidth=0.02](3.1903956,-2.7879233)(3.287074,-2.5132225)(6.259943,-2.7879233)(5.9940767,-2.5321674)
\psbezier[linewidth=0.02](3.2145653,-2.7879233)(3.0695472,-2.50375)(0.0,-2.7879233)(0.3383753,-2.5416398)
\psframe[linewidth=0.04,dimen=outer,fillstyle=solid,fillcolor=color1553b](1.7798148,-1.8653127)(1.0398148,-2.4653125)
\psframe[linewidth=0.04,dimen=outer](5.6413774,2.4846876)(4.9013777,-0.7053127)
\psline[linewidth=0.098000005cm](4.5613775,2.3246875)(6.1213775,-0.33531272)
\psline[linewidth=0.098000005cm](5.9813776,2.3046875)(4.4013777,-0.31531274)
\usefont{T1}{ptm}{m}{n}
\rput(5.2685647,3.0546875){Disk $z$}
\psframe[linewidth=0.04,dimen=outer,fillstyle=solid,fillcolor=color1553b](3.359815,-1.8653127)(2.6198149,-2.4653125)
\end{pspicture} 
}
	\caption{Case 2}
	\label{fig:case2}
	\end{figure}
\item {\bf Case 3:} There exists a column-unit $u_z$ in Disk $z$ such that that $u_x$ and $u_z$ belong to some $2$-parity group $\G_i$ and moreover, none of the columns of $\G_i$ are located in Disk $y$. In this case, as $\G_i$ loses only one column, namely $u_z$, $\tau_1$ units have to be read from
	$u_x$ for the recovery of the lost column. According to Corollary~\ref{cr:Stinson2}, there are precisely $\lambda_2^{(1)}$ such pairs $(u_x, u_z)$.
%%%%%%%%%%%%%%%
	\begin{figure}[h]
	\centering
	\scalebox{1} % Change this value to rescale the drawing.
{
\begin{pspicture}(0,-3.233125)(6.2903776,3.233125)
\definecolor{color1553b}{rgb}{0.8,0.8,0.8}
\psframe[linewidth=0.04,dimen=outer](3.6813774,2.5146875)(2.9413774,-0.67531264)
\psframe[linewidth=0.04,dimen=outer](1.2013773,2.5146875)(0.46137685,-0.67531264)
\psline[linewidth=0.098000005cm](2.6013775,2.3546875)(4.1613774,-0.3053126)
\psline[linewidth=0.098000005cm](4.0213776,2.3346875)(2.4413774,-0.28531262)
\usefont{T1}{ptm}{m}{n}
\rput(3.289815,3.0446875){Disk $y$}
\usefont{T1}{ptm}{m}{n}
\rput(0.82450235,3.0446875){Disk $x$}
\psframe[linewidth=0.04,dimen=outer,fillstyle=solid,fillcolor=color1553b](1.1998149,1.3246875)(0.45981485,0.7246874)
\usefont{T1}{ptm}{m}{n}
\rput(0.7840338,1.0346874){$u_x$}
\psbezier[linewidth=0.02,arrowsize=0.05291667cm 2.0,arrowlength=1.4,arrowinset=0.4]{->}(1.13,-1.8268751)(1.19,-1.8468751)(0.7399997,-1.0868752)(0.83981484,0.7646875)
\usefont{T1}{ptm}{m}{n}
\rput(3.1540334,-3.0053124){$\G_i$}
\psframe[linewidth=0.04,dimen=outer](6.079815,-1.8353126)(0.19981492,-2.4553125)
\psbezier[linewidth=0.02](3.1903956,-2.7579234)(3.287074,-2.4832225)(6.259943,-2.7579234)(5.9940767,-2.5021675)
\psbezier[linewidth=0.02](3.2145653,-2.7579234)(3.0695472,-2.47375)(0.0,-2.7579234)(0.3383753,-2.5116398)
\psframe[linewidth=0.04,dimen=outer,fillstyle=solid,fillcolor=color1553b](5.399815,-1.8353126)(4.659815,-2.4353125)
\psframe[linewidth=0.04,dimen=outer](5.7613773,2.5146875)(5.0213776,-0.67531264)
\psline[linewidth=0.098000005cm](4.6813774,2.3546875)(6.2413774,-0.3053126)
\psline[linewidth=0.098000005cm](6.1013775,2.3346875)(4.5213776,-0.28531262)
\usefont{T1}{ptm}{m}{n}
\rput(5.3598146,3.0446875){Disk $z$}
\psframe[linewidth=0.04,dimen=outer,fillstyle=solid,fillcolor=color1553b](5.7598147,-0.0753126)(5.019815,-0.6553126)
\usefont{T1}{ptm}{m}{n}
\rput(5.3440337,-0.3653126){$u_z$}
\psbezier[linewidth=0.02,arrowsize=0.05291667cm 2.0,arrowlength=1.4,arrowinset=0.4]{->}(5.0399995,-1.866875)(5.06,-1.3268751)(5.4399996,-1.306875)(5.3799996,-0.6868751)
\psframe[linewidth=0.04,dimen=outer,fillstyle=solid,fillcolor=color1553b](1.5398148,-1.8353126)(0.7998149,-2.4353125)
\end{pspicture} 
}
	\caption{Case 3}
	\label{fig:case3}
	\end{figure} 
\end{itemize}
%%%%%%%%%%%%%%%%%%%
Therefore, in summary, when Disk $y$ and Disk $z$ fail, the number 
of units to be read from Disk $x$ for the reconstruction is precisely
\[
\lambda \tau_2 + 2\lambda_2^{(1)} \tau_1. 
\]
As this number is a constant for every three distinct disks $x$, $y$, and $z$, 
we conclude that when two disks fail, the reconstruction workload is evenly
distributed across all surviving disks. 	

\section{Proof of Theorem~\ref{pro:trade-off}}
\label{appendix-C}

Suppose the $2$-parity group $\G$ employed in Algorithm~2 has $m$
rows. 
Recall that $\tau_i$, $i = 1, 2$, denotes the number of entries to be read from
every other column when $i$ columns of $\G$ are lost.
If $\G$ is an RDP/EVENODD/RS $2$-parity group then $\tau_1$ and $\tau_2$
can be explicitly computed. 
Indeed, according to Lemma~\ref{lem:hpg-workload}, we have
\begin{equation} 
\label{eq:tau_1}
\tau_1 = m\dfrac{k-2}{k-1}. 
\end{equation} 
When two columns of $\G$ are lost, all $k-2$ other columns have to be read in full for the recovery of the lost columns. Therefore 
\begin{equation}
\label{eq:tau_2}
\tau_2 = m. 
\end{equation}

\subsection{Proof of $\C$ satisfying (P4)}

According to Corollary~\ref{cr:Stinson0}, each column of $\C$ contains precisely $\lambda_1$ column-entries. Moreover, each of these column-entries consists of $m$ entries. Therefore, each column of $\C$ consists of 
\[
M = m\lambda_1 = m\dfrac{\lam (n-1)(n-2)}{(k-1)(k-2)}
\]
entries. 
 
\subsection{Proof of $\C$ satisfying (P5)}

We need to show that $\C$ has $\frac{(k-2)n}{k}$ 
disks worth of data and $\frac{2n}{k}$ disks worth of parity.

There are $|\B|$ $2$-parity balanced groups and each group consists
of $2m$ parity units (see Definition~\ref{def:pg}). Therefore, the total 
number of parity units in $\C$ is $2m|\B|$. Therefore, $\C$ contains
\[
\dfrac{2m|\B|}{M} = \dfrac{2|\B|}{\lam_1} = \dfrac{2\frac{\lam n(n-1)(n-2)}{k(k-1)(k-2)}}{\frac{\lam (n-1)(n-2)}{(k-1)(k-2)}} = \dfrac{2n}{k}
\]
disks worth of parity. We deduce that $\C$ contains
\[
n - \dfrac{2n}{k} = \dfrac{(k-2)n}{k}.
\]
disks worth of data. 

\subsection{Proof of $\C$ satisfying (P6)}

We need to prove that if $\G$ is an RDP/EVENODD/RS $2$-parity group then in order to reconstruct one failed disk, a portion $\frac{k-2}{n-1}$ of the total content of each surviving disk needs to be read.

Suppose one column of $\C$ is lost. According to Appendix~\ref{appendix-B}, $\lam_2 \tau_1$ entries must be read from each other column
for the reconstruction of the missing column. Since each column of $\C$
consists of $M$ entries, a portion
\[
\dfrac{\lam_2 \tau_1}{M} = \dfrac{\frac{\lam(n-2)}{k-2}m\frac{k-2}{k-1}}{\frac{\lam(n-1)(n-2)}{(k-1)(k-2)}m} = \dfrac{k-2}{n-1}
\]
of the total content of each surviving disk must be read. 

\subsection{Proof of $\C$ satisfying (P7)}

We need to show that if $\G$ is an RDP/EVENODD/RS $2$-parity group then in order to reconstruct two failed disks, a portion $\frac{(k-2)(2n-k-1)}{(n-1)(n-2)}$ of the total content of each surviving disk needs to be read. 

Suppose two columns of $\C$ are lost. According to Appendix~\ref{appendix-B}, $\lam \tau_2 + 2\lam_2^{(1)} \tau_1$ entries must be read from each other column for the reconstruction of the two missing columns.
Thus, a portion
\begin{equation}
\label{eq:e1} 
\begin{split}
\dfrac{\lam \tau_2 + 2\lam_2^{(1)} \tau_1}{M}
\end{split}
\end{equation} 
of the total content of each surviving column needs to be read for the recovery of two columns of $\C$. 
Substituting (\ref{eq:tau_1}), (\ref{eq:tau_2}), (\ref{eq:lam_1}), and 
(\ref{eq:lam_2^1}) into (\ref{eq:e1}), the ratio in this equation can be simplified to 
\[
\dfrac{(k-2)(2n-k-1)}{(n-1)(n-2)}. 
\]

\end{document}